\documentclass[a4paper,UKenglish,cleveref, autoref, thm-restate]{LIPICS/lipics-v2021}
\pdfoutput=1 %uncomment to ensure pdflatex processing (mandatatory e.g. to submit to arXiv)
\hideLIPIcs  %uncomment to remove references to LIPIcs series (logo, DOI, ...), e.g. when preparing a pre-final version to be uploaded to arXiv or another public repository

\usepackage{mathtools}
\usepackage{booktabs}
\usepackage{bbm} % for \mathbbm{1} (indicator function)
\usepackage[linesnumbered,ruled,vlined]{algorithm2e}
\SetKwInOut{Input}{Input}
\SetKwInOut{Output}{Output}

\title{Constant-Factor Approximations for Doubly Constrained Fair \texorpdfstring{\(k\)}{k}-Center, \texorpdfstring{\(k\)}{k}-Median and \texorpdfstring{\(k\)}{k}-Means} % working title :)
\titlerunning{Constant-Factor Approximations for Doubly Constrained Fair \texorpdfstring{\(k\)}{k}-Clustering} %TODO optional, please use if title is longer than one line

\author{Nicole Funk}{Department of Informatics, University of Cologne, Germany}{funk@cs.uni-koeln.de}{}{Funded by the Deutsche Forschungsgemeinschaft (DFG, German Research Foundation) – Project Number 559931366}

\author{Annika Hennes}{Heinrich Heine University Düsseldorf, Faculty of Mathematics and Natural Sciences, Germany}{annika.hennes@hhu.de}{}{Funded by the Deutsche Forschungsgemeinschaft (DFG, German Research Foundation) - Project 456558332}

\author{Johanna Hillebrand}{Heinrich Heine University Düsseldorf, Faculty of Mathematics and Natural Sciences, Germany}{johanna.hillebrand@hhu.de}{}{Funded by the Deutsche Forschungsgemeinschaft (DFG, German Research Foundation) - Project 459420781}

\author{Sarah Sturm}{Department of Informatics, University of Bonn, Germany}{ssturm@uni-bonn.de}{}{Funded by the Deutsche Forschungsgemeinschaft (DFG, German Research Foundation) – Project Number
459420781}

\authorrunning{N. Funk, A. Hennes, J. Hillebrand and S. Sturm} %TODO mandatory. First: Use abbreviated first/middle names. Second (only in severe cases): Use first author plus 'et al.'

\Copyright{Nicole Funk, Annika Hennes, Johanna Hillebrand and Sarah Sturm} %TODO mandatory, please use full first names. LIPIcs license is "CC-BY";  http://creativecommons.org/licenses/by/3.0/
\ccsdesc[500]{Theory of computation~Facility location and clustering}
\ccsdesc[500]{Computing methodologies~Cluster analysis}
\ccsdesc[300]{Theory of computation~Rounding techniques}
\ccsdesc[100]{Theory of computation~Linear programming}

\keywords{Clustering, Fairness, Approximation Algorithms,
$k$-center,
$k$-median,
$k$-means} %TODO mandatory; please add comma-separated list of keywords

\supplement{An implementation of the doubly constrained fair k-center algorithm}%optional, e.g. related research data, source code, ... hosted on a repository like zenodo, figshare, GitHub, ...
\supplementdetails{Software}{https://github.com/FairClusteringResearch/doubly_fair_clustering} %linktext, cite, and subcategory are optional

\acknowledgements{We thank Melanie Schmidt for her valuable feedback and advice, Anna Arutyunova for proofreading the manuscript, and Lukas Drexler and Elmar Langetepe for productive initial discussions. We are also grateful to the three anonymous reviewers whose constructive comments helped improve this paper.}%optional

\nolinenumbers %uncomment to disable line numbering

\EventEditors{Pierre Fraigniaud}
\EventNoEds{1}
\EventLongTitle{20th Scandinavian Symposium on Algorithm Theory (SWAT 2026)}
\EventShortTitle{SWAT 2026}
\EventAcronym{SWAT}
\EventYear{2026}
\EventDate{June 17--19, 2026}
\EventLocation{Copenhagen, Denmark}
\EventLogo{}
\SeriesVolume{370}
\ArticleNo{7}
\usepackage{tikz}
\usetikzlibrary{positioning}
\usetikzlibrary{backgrounds,fit,shapes.misc,shapes.geometric,hobby}
\usetikzlibrary{arrows.meta, calc}
\usepackage{tkz-euclide}

\tikzset{
	every picture/.style={line width=1pt},
	point/.style args={#1}{circle, draw=#1, fill=#1, minimum size=4pt, inner sep=0pt},
	center/.style args={#1}{diamond, draw=#1, fill=#1, line width= .5, minimum size=6.5pt, inner sep=0pt},
	outlier/.style args={#1}{circle, draw=#1, minimum size=4pt, inner sep=0pt},
    assign/.style={->,shorten >=.5pt, shorten <= .5pt},
    reassign/.style={->,shorten >=.5pt, shorten <= .5pt, blue},
    nwedge/.style={line width=0.5pt},
    rerouting/.pic={
        \coordinate (i1) at (0,0);
        \node[point=black, label=left:\(j_1\)] (j1) at ($(i1)+(.2,1.9)$) {};
        \node[point=black, label=below:\(j_2\)] (j2) at ($(i1) + (1.8,0.5)$) {};
        \draw[draw=black, line width=1pt] (i1) circle (2.2cm);
    },
    morererouting/.pic={
        \pic{rerouting};
        \node[center=black, label=above left:\(i_1\)] (i1) at (i1) {};
        \node[point=blue, label=above:\(p\)] (p) at ($(i1)+(4,4)$) {};
        \node[center=black, label=below:\(i_2\)] (i2) at ($(p)+(0,-4)$) {};
        \draw[draw=black, line width=1pt] (i2) circle (2.5cm);
    },
}

\DeclarePairedDelimiter{\ceil}{\lceil}{\rceil}
\DeclarePairedDelimiter{\floor}{\lfloor}{\rfloor}
\DeclarePairedDelimiter\abs{\lvert}{\rvert}
\DeclareMathOperator*{\argmin}{\arg\!\min}
\DeclareMathOperator*{\OPT}{OPT}
\DeclareMathOperator*{\C}{\mathscr{C}}
\DeclareMathOperator{\GF}{\text{GF}}
\DeclareMathOperator{\DS}{\text{DS}}
\DeclareMathOperator{\CDS}{\C_{\DS}}
\DeclareMathOperator{\LP}{\text{LP}}
\DeclareMathOperator{\cost}{\text{cost}}

\DeclareMathOperator{\dsfactor}{\gamma}
\DeclareMathOperator{\dsfactorcenter}{\dsfactor_{\text{center}}}
\DeclareMathOperator{\dsfactormed}{\dsfactor_{\text{med}}}
\DeclareMathOperator{\dsfactormeans}{\dsfactor_{\text{means}}}

\newcommand{\dickerson}{Dickerson, Esmaeili, Morgenstern, and Pena}

\newcommand{\jones}{Jones, Nguyen and Nguyen}

\newcommand{\chierichetti}{Chierichetti, Kumar, Lattanzi and Vassilvitskii}

\newcommand{\bercea}{Bercea, Gro{\ss}, Khuller, Kumar, R{\"{o}}sner, Schmidt and Schmidt}
\newcommand{\berceaetal}{Bercea et al.}
\newcommand{\ahmadian}{Ahmadian, Epasto, Kumar and Mahdian}

\newcommand{\bera}{Bera, Chakrabarty, Flores and Negahbani}
\newcommand{\beraetal}{Bera et al.}
\newcommand{\thejaswi}{Thejaswi, Ordozgoiti and Gionis}

\newcommand{\cupdot}{\mathbin{\dot{\cup}}}

\begin{document}

\maketitle

\begin{abstract}
    We study discrete \(k\)-clustering problems in general metric spaces that are constrained by a combination of two different fairness conditions within the \emph{demographic fairness} model.
    Given a metric space \((P,d)\), where every point in \(P\) is equipped with a protected attribute, and a number \(k\), the goal is to partition
    \(P\) into \(k\) clusters with a designated center each, such that a center-based objective function is minimized and the attributes are fairly distributed with respect to the following two fairness concepts:
    1) \emph{group fairness}: We aim for clusters with balanced numbers of attributes by specifying lower and upper bounds for the desired attribute proportions.
    2) \emph{diverse center selection}: Clusters have natural representatives, i.e., their centers. We ask for a balanced set of representatives by specifying the desired number of centers to choose from each attribute. 
    
    \dickerson~\cite{dickerson2023doubly} denote the combination of these two constraints as \emph{doubly constrained fair clustering}.
    They present algorithms whose guarantees depend on the best known approximation factors for either of these problems. Currently, this implies an \(8\)-approximation with a small additive violation on the group fairness constraint. 
    For \(k\)-center, we improve this approximation factor to \(4\) with a small additive violation. This guarantee also depends on the currently best algorithm for DS-fair \(k\)-center given by \jones~\cite{jones2020matching}. For \(k\)-median and \(k\)-means, we propose the first constant-factor approximation algorithms.
    Our algorithms transform a solution that satisfies diverse center selection into a doubly constrained fair clustering using an LP-based approach. 
    Furthermore, our results are generalizable to other center-selection constraints, such as matroid \(k\)-clustering and knapsack constraints.
\end{abstract}

\section{Introduction}
With algorithms being used in a wide range of applications that can have a significant impact on individuals and society, fairness in machine learning has become an increasingly important topic in recent years. 
Much of the work is inspired by the disparate impact effect, which describes how machine learning algorithms may produce biased outcomes that disproportionately affect individual groups based on certain protected attributes.
Clustering is a fundamental task in unsupervised learning, making this issue relevant in this context. 
Arguably, the most popular center-based clustering objectives are \(k\)-center, \(k\)-median, and \(k\)-means. 
For the unconstrained versions in general metric spaces, the best known approximation factors are \(2\) for \(k\)-center \cite{hochbaum1986unified, DBLP:journals/tcs/Gonzalez85}, \(2+\epsilon\) for \(k\)-median \cite{CohenAddad2025kmedian} and \(3+2\sqrt{2}+\epsilon\approx 5.83 \) for \(k\)-means \cite{charikar2025kmeans}.

Directly addressing the disparate impact effect, \emph{demographic fairness} notions aim to ensure that protected attributes are fairly represented in the clusters. 
We say that data points belong to the same group if they share the same protected attribute value. 
Throughout, we will assume that any data point belongs to exactly one group, i.\,e., that the groups are disjoint. We also identify groups with different colors.
Since the work of \chierichetti \cite{chierichetti2017fairlets} initiated the rigorous study of fair clustering, a significant amount of research has emerged on this topic \cite{DBLP:journals/access/ChhabraMM21survey,DBLP:conf/satml/0001EMZ25survey}, spanning a variety of different fairness notions. 
Here, we focus on two specific demographic fairness constraints: \emph{group fairness} and \emph{diverse center selection}.
Our goal will be to construct clusterings that satisfy the \emph{combination} of these fairness constraints, but first, we will discuss them individually.
\subparagraph*{Group fairness}
Group fairness concerns the composition of clusters, namely ensuring that the proportion of points from any group in each cluster is within certain bounds. For group fair \(k\)-center, \(k\)-median, and \(k\)-means there exist: a \(3\)-approximation with additive violation of at most \(1\) \cite{bercea2019cost}, a \((4+\epsilon)\)-approximation with violation of \(\le 1\) \cite{bercea2019cost, DBLP:conf/nips/BeraCFN19}, and a \(10.66\)-approximation with violation of \(\le 3\) \cite{DBLP:conf/nips/BeraCFN19}, respectively.
\subparagraph*{Diverse center selection}
Complementing the group fairness notion, diverse center selection aims to ensure fair representation of the groups. As centers are naturally interpreted as representatives of the clusters, this constraint requires that from each group, a certain number of points be selected as centers. We also say that a clustering is \emph{diversity-aware} when the selected centers satisfy this constraint.
For diversity-aware \(k\)-center, \(k\)-median, and \(k\)-means with exact numbers of points per center, there exist: a \(3\)-approximation \cite{jones2020matching}, a \(7.081\)-approximation \cite{thejaswi2021diversity, Krishnaswamy2018Rounding}, and a \(256\)-approximation \cite{thejaswi2021diversity, Vakilian2022indivfair} (and a \(64\)-approximation if the distances are Euclidean \cite{thejaswi2021diversity, zhao2025matroidmeans}), respectively.
\subparagraph*{Doubly constrained fair clustering}
Most works study these two constraints in isolation, but one could argue that, in many applications, it is desirable to achieve both constraints simultaneously. Group fairness alone can be useful to guarantee a range of different opinions and life experiences in a working environment or in a voting group, i.\,e., a cluster. This can boost productivity or prevent manipulation, such as gerrymandering, but loses its effect if the chosen representatives of the groups are not as diverse and act in their own interests. At the other extreme, selecting only diverse centers without addressing the distributions within the clusters leaves the possibility of manipulation or peer pressure, mitigating the advantages of diverse centers.

\dickerson~\cite{dickerson2023doubly} were the first to study the combination of these two constraints for the \(k\)-center problem and coined the term \emph{doubly constrained fair clustering}. 
They present algorithms that sequentially solve the two problems to achieve solutions for the doubly constrained problem. 
The guarantees depend on the order in which the two problems are considered. 
By first computing a group fair solution, and transforming it into a doubly fair solution, they get a \(2\alpha_{\GF}\)-approximation for doubly fair \(k\)-center, where \(\alpha_{\GF}\) is the approximation guarantee of the algorithm used to compute the group fair clustering. Note that their approach requires an algorithm for group fair clustering with no additive violation, which currently does not exist for \(k\)-center. Therefore, we will only compare to their results for the other direction, where they first compute an \(\alpha_{\DS}\)-approximation for diversity-aware \(k\)-center and then transform it into a doubly constrained fair solution, which yields a \(2(\alpha_{\DS}+1)\)-approximation overall.
While one could easily arrange for diverse center selection by moving centers into their own singleton clusters, to ensure group fairness, clusters usually need to contain multiple points in addition to their selected center.

\subparagraph*{Our contributions}
We improve on known results for \(k\)-center and construct the first constant-factor approximation algorithms for doubly constrained fair \(k\)-median and \(k\)-means. 
\begin{itemize}
    \item We improve the currently best known approximation guarantee for doubly constrained fair \(k\)-center \cite{dickerson2023doubly} by a factor of \(2\). 
    Using the currently best approximation factor for diversity-aware \(k\)-center \cite{jones2020matching}, this implies a \(4\)-approximation for doubly constrained fair \(k\)-center with an additive violation of at most \(2\) for the group fairness constraint.
    \item We present the first polynomial-time constant-factor approximation algorithms for doubly constrained fair \(k\)-median and \(k\)-means. For \(k\)-median, our algorithm currently implies a \(10.081\)-approximation, and for \(k\)-means, it yields a \(291+2\sqrt{290}\approx 325.06\)-approximation, both with an additive violation of at most \(2\) for the group fairness constraint.
    \item Our algorithms generalize to combinations of the group fairness constraint with other constraints on the centers, such as matroid and knapsack constraints or individual fairness. 
\end{itemize}

\subparagraph*{Techniques}
Our algorithms for all three objectives follow the same framework:

First, we independently compute a diversity-aware center set using an external approximation algorithm and a (fractional) group fair solution by solving a linear program. 
Next, we combine the two solutions by rerouting assignments so that only centers from the diversity-aware clustering get assigned points. 
Finally, we compute an integral solution by solving a max flow instance for \(k\)-center and a min cost flow instance for \(k\)-median and \(k\)-means. 

The idea for the overall framework and the rerouting of assignments is inspired by the work of \ahmadian~\cite{ahmadian2019Overrepresentation}, who study a special case of group fairness with only upper bounds on the ratios of colors and provide a \(3\)-approximation with additive violation of at most \(2\) for \(k\)-center without over-representation. 
We significantly adapt their techniques to handle our doubly constrained setting.

First, we must incorporate lower bounds into the group fairness constraint. While adapting the linear program is straightforward, it requires a more complex rerouting of assignments to ensure that the lower bounds are still satisfied after reassigning the points.

Second, we need to handle the fact that we have to combine two different solutions. They compute a set of centers that have pairwise distances of at least some appropriate threshold, which guarantees that there are at most \(k\) centers. 
The LP solution guarantees that every point has a close LP center, and the lower bound on the pairwise distances of the centers guarantees that every new center has a close LP center that is not as close to other centers. 
This allows them to reroute all the mass assigned to an old center in the LP solution to the closest new center, and every new center gets assigned some positive mass. 
In our case, we do not have such a guarantee, as the diversity-aware solution does not provide a lower bound on the pairwise distances between the centers. Just reassigning the mass in the same way would not work as some centers from the diversity-aware solution might not get assigned any points, violating the requirement that all centers should be active.
Therefore, we need to reroute the mass in a more complex way.
We ensure that every cluster is assigned some mass, which may require splitting the incoming mass of an LP center appropriately among multiple centers from the diversity-aware solution while maintaining group fairness. Then, we reassign the remaining mass by the approach described above.
For the other objectives, we need to refine the rerouting approach further to keep the solution's cost small. %\ah{maybe we can say something more specific here?} \jh{i think its good as is}

We compare our results with those of \cite{dickerson2023doubly} for the direction in which they transform a diversity-aware solution into a doubly constrained fair solution. 
Our algorithm performs the transformation in the same direction, and no applicable algorithm exists for the other direction.
Their approach is similar to the framework described above, but differs in that they incorporate the centers computed by the diversity-aware clustering algorithm directly into their linear program for group fairness. 
Solving the linear program with the given centers causes the problem that some centers might not get assigned any points, i.\,e., they are not active. To overcome this issue, they replace inactive centers with points from suitable groups inside clusters that are large enough to be split into two group fair clusters. The additional factor of \(2\) in their approximation guarantee comes from this splitting step, which is not necessary in our approach. 
They finalize their solution by rounding the assignments by computing a max-flow, while we need to reroute assignments to other centers first before solving a max-flow instance.

\subsection{Problem definition}

Let $(P,d)$ be a general metric space with $\vert P \vert = n$ and $k \in \{1,\ldots,n\}$ an integer. 
A $k$-clustering is a partition of $P$ into $k$ clusters $C_1,\dots, C_k$ optimizing some objective function.
We are considering center-based objectives, where a clustering is induced by a set of centers \(\C\subseteq P\) with \(|\C|\le k\) and an assignment \(\varphi\colon P\to \C\) mapping points to centers. 
Note that we study the \emph{discrete} variant of \(k\)-clustering, where the set of centers must be a subset of the point set.
The cluster with center \(c\) is given by \(C = \varphi^{-1}(c)\). 
In the \(k\)-center objective, our goal is to minimize the maximal distance of any point to its assigned center, i.\,e., we want to minimize the maximal radius \(\max_{p\in P}d(p,\varphi(p))\) among the clusters.
In \(k\)-median, the objective is to minimize the sum of distances of points to their assigned center, i.\,e., \(\sum_{p\in P}d(p,\varphi(p))\),
and in \(k\)-means it is the sum of squared distances of points to centers that we want to minimize, i.\,e., \(\sum_{p\in P}d(p,\varphi(p))^2\).

We study clusterings that satisfy two types of demographic fairness constraints. Both fairness notions rely on a coloring of the points.
Let \(m\) be some number of colors. For \(1\le h \le m\), we denote the group of points of color \(h\) by \(P_h\). We assume that every point in \(P\) is assigned exactly one color \(1\le h \le m\), i.e., \(P= P_1 \cupdot \ldots \cupdot P_m\). 
On the one hand, we want to achieve a fair distribution of colors in the clusters. This means that every color should constitute a certain fraction in every cluster. For every color \(h\), we are given a lower bound \(\ell_h\) and an upper bound \(u_h\) on these fractions. 
To relax this problem, we further allow a small additive violation of these fractional bounds.
\begin{definition}[Group fair cluster]\label{def:groupfairwithviolation}
    For all \(1\le h\le m\), let \(\ell_h,u_h\) be values with \(0 \le \ell_h \leq u_h \leq 1 \).
    Let \(\rho\in \mathbb{N}\). We say that a cluster \(C\) is \emph{group fair with additive violation \(\rho\)} if 
    \[ \ell_h|C| - \rho \le |C\cap P_h| \le u_h|C| + \rho \]
    for all colors \(1\le h\le m\). We say that a cluster is \emph{group fair} if \(\rho = 0\).
\end{definition}
We say that a clustering fulfills the \emph{group fairness constraint (GF)} if every cluster is group fair. If the lower and upper bounds on the ratios of the groups equal the respective proportions in the overall dataset, we also speak of \emph{exact preservation of ratios} or short \emph{exact group fairness}.
Additionally, we aim for a fair representation of the colors, formalized by lower and upper bounds on the number of centers per group.
\begin{definition}[Diverse center selection]
    Given \(L_h, U_h \in \mathbb{N}\) for \(h \le m\) such that \(\sum_{h \le m} L_h \le k \le \sum_{h \le m} U_h\), we say that a set of centers \(C\) satisfies the \emph{diverse center selection (DS)} constraint if \(L_h \le |C \cap P_h| \le U_h\) for all \(h \le m\).
\end{definition}
The definition in its most general form is also called \emph{fair-range} clustering; the special case with only lower bounds is also referred to as \emph{minority protection}; we denote the case \(L_h=U_h\) for all \(h\in H\) by \emph{exact diverse center selection}.
All guarantees with fixed approximation guarantees stated in this paper hold for exact diverse center selection; some even hold for more general cases, see also the section on related work. 
Our algorithm does not depend on the specific combinatorial structure of the center set and therefore also works for the generalized version, provided appropriate approximation algorithms are available.
Throughout, we will consider only instances with feasible solutions. 

\subsection{Our results}

We have two main results; the first is a constant factor approximation algorithm for the doubly constrained fair \(k\)-center problem.

\begin{restatable}{theorem}{thmcenter} %[Restatement of \cref{thm:k-center}]
    \label{thm:k-center}
    Let \(\dsfactorcenter\) be the approximation factor of a given approximation algorithm for DS-fair \(k\)-center.
    There exists an algorithm that computes a \((\dsfactorcenter+1)\)-approximation with GF-violation of \(\le 2\) for the doubly constrained fair \(k\)-center problem in polynomial time. 
\end{restatable}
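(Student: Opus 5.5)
We follow the framework sketched in the techniques paragraph. First, we guess the optimal radius \(r=\OPT\) of the doubly constrained fair instance; there are only \(O(n^2)\) candidate distances. We run the given DS-fair algorithm to obtain a diversity-aware center set \(S\) with \(|S|\le k\). Since any doubly fair solution is in particular DS-fair, every point is within distance \(\dsfactorcenter r\) of \(S\). Independently, we solve the natural group fairness LP with threshold \(r\). It has variables \(y_i\) (openings, \(\sum_i y_i\le k\)) and \(x_{ij}\) (assignments, only for \(d(i,j)\le r\)), with \(\sum_i x_{ij}=1\), \(x_{ij}\le y_i\), and the fractional fairness constraints \(\ell_h\sum_j x_{ij}\le \sum_{j\in P_h}x_{ij}\le u_h\sum_j x_{ij}\) for each \(i\) and \(h\). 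The optimal doubly fair clustering is a feasible integral point, so the LP is feasible for the correct guess.

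The core step is to reroute the LP mass onto \(S\), producing a fractional assignment \(x'\) from points to centers of \(S\) with three properties. First, every point \(j\) goes only to centers within distance \((\dsfactorcenter+1)r\). Second, each center \(s\in S\) receives positive mass, so all centers in \(S\) are active. Third, every fractional cluster is group fair. Rerouting is done per LP center \(i\). Mass is moved in proportional bundles: whenever we move a fraction \(\lambda\) of \(i\)'s incoming mass, we take \(\lambda x_{ij}\) from each \(j\). This preserves color ratios, so any sum of such bundles at \(s\) still satisfies the fairness inequalities. Hence fairness is automatic and we only control distances.

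The distance control works as follows. For \(s\in S\), pick a point \(p_s\) (e.g.\ \(s\) itself). It has an LP center \(i\) with \(x_{is}>0\) and \(d(s,i)\le r\). Reserve for \(s\) a positive slice of \(i\)'s mass; points \(j\) in that slice satisfy \(d(j,s)\le d(j,i)+d(i,s)\le 2r\le(\dsfactorcenter+1)r\). Several \(s\) may choose the same \(i\), in which case \(i\)'s mass is split into small slices, which is fine. The remaining mass of each LP center \(i\) goes to an \(s\in S\) nearest to \(i\). Because \(i\) itself has some assigned point within \(r\), or more simply because \(i\in P\), we have \(d(i,S)\le\dsfactorcenter r\). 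Then \(d(j,s)\le r+\dsfactorcenter r\). I expect the \textbf{main obstacle} to be making this rigorous while keeping the resulting fractional solution valid for rounding. In particular, every \(s\) must keep a fair cluster of positive size, and the per-center splitting must be consistent.

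Finally, we round. We fix \(S\) and the cluster sizes, and build a flow network from points to centers \(s\) through color nodes \((s,h)\). Edges exist only for distances \(\le(\dsfactorcenter+1)r\), capacities are the floors and ceilings of the fractional masses \(\sum_{j\in P_h}x'_{sj}\), and cluster totals are rounded to the floor or ceiling of \(\sum_j x'_{sj}\). The fractional \(x'\) is a feasible flow, so integrality of flows with lower and upper bounds (as in Bercea et al.\ or Ahmadian et al.) gives an integral assignment. Each color count and cluster size changes by less than \(1\). The fairness bounds then shift by at most \(1+u_h\cdot 1\le 2\), yielding additive violation \(\le 2\). Since every fractional cluster has positive mass, rounding with at least the ceiling for some point guarantees each \(s\) keeps a point, namely \(s\) itself via its reserved slice, if needed. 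The centers are \(S\), so DS holds, and the radius is \((\dsfactorcenter+1)r\).
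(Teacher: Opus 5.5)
Your framework matches the paper's: DS centers $S$ from the opaque box, the GF LP at threshold $r$, ratio-preserving rerouting of whole bundles onto $S$ (within $2r$ for reserved mass and $(1+\gamma)r$ via the nearest center of $S$), then floor/ceiling flow rounding with violation $2$. There is, however, a genuine gap exactly at the obstacle you flagged: keeping every $s\in S$ active after rounding.

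Your reserved slice only guarantees $\sum_j x'_{sj}>0$. You cannot enlarge it to $1$ either, because it comes from a single LP center $i$ whose total incoming mass $\sum_j x_{ij}$ can itself be far below $1$. If $\sum_j x'_{sj}<1$, the lower bound $\lfloor\sum_j x'_{sj}\rfloor$ on the edge $(s,t)$ is $0$, and so is every $\lfloor\sum_{j\in P_h}x'_{sj}\rfloor$. Nothing then prevents the integral flow from sending no point to $s$; in particular $s$ itself, which is only fractionally assigned to $s$, may be routed elsewhere.

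Your last sentence does not repair this. Raising the lower bound at $s$ to $1$ makes $x'$ infeasible for the network, so flow integrality no longer yields an integral flow of value $|P|$. Pinning $s$ to itself is likewise not backed by $x'$. An empty $s$ breaks DS: the paper explicitly requires all centers of $S$ to be active, and dropping $s$ can violate some $L_h$.

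The missing idea is to reserve mass for $s$ from \emph{every} $p$ with $x_{ps}>0$, in proportion to $x_{ps}$. Give $s$ the fraction $x_{ps}/\sum_{c\in S}x_{pc}$ of all mass arriving at $p$; this is the paper's $k$-center rule. The paper's $k$-median rule, which is closest to what you describe, uses $x_{ps}/\sum_j x_{pj}$ and sends the leftover to the center of $S$ nearest to $p$.

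These fractions sum to at most $1$ at each $p$. All such mass comes from points within $2r$ of $s$, since $d(j,p)\le r$ and $d(p,s)\le r$. Because $\sum_j x_{pj}\ge\sum_{c\in S}x_{pc}$, the mass $s$ receives via $p$ is at least $x_{ps}$. Summing over $p$ and using constraint (LP-1.1) gives $\sum_j x'_{sj}\ge 1$. Then the lower bound $\lfloor\sum_j x'_{sj}\rfloor\ge 1$ on $(s,t)$ forces at least one point into each cluster, and the rest of your argument goes through.
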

\noindent Using the currently best approximation factor of \(3\) for \(k\)-center with DS by \jones~\cite{jones2020matching}, this yields a \(4\)-approximation.

\begin{corollary}
    There exists an algorithm that computes a \(4\)-approximation for the doubly constrained fair \(k\)-center problem with a GF-violation of at most \(2\).
\end{corollary}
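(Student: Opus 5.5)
We follow the framework sketched in the introduction. First, guess the optimal radius \(R=\OPT\) (it is one of the \(O(n^2)\) pairwise distances) and run the given DS approximation algorithm, obtaining a diversity-aware center set \(\CDS\) with \(|\CDS|\le k\). Every point lies within distance \(\dsfactorcenter R\) of \(\CDS\), because the optimal doubly fair solution is in particular DS-feasible. Independently, we solve the natural group-fairness LP with radius \(R\). It has variables \(y_i\) (opening) and \(x_{ij}\le y_i\) (assignment), with \(x_{ij}=0\) whenever \(d(i,j)>R\). The constraints are \(\sum_i x_{ij}=1\), \(\sum_i y_i\le k\), and the linear fairness constraints \(\ell_h\sum_j x_{ij}\le \sum_{j\in P_h}x_{ij}\le u_h\sum_j x_{ij}\) for every \(i\) and \(h\). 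The optimal doubly fair clustering is a feasible integral solution, so the LP is feasible.

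The core step is rerouting the LP mass onto \(\CDS\) so that three properties hold: (a) every center of \(\CDS\) receives positive mass, (b) every point is sent only to centers within distance \((\dsfactorcenter+1)R\), and (c) each new center's fractional cluster stays group fair. Group fairness is preserved if we only move \emph{whole scaled copies} of LP clusters, i.e.\ a fraction \(\lambda\) of the entire assignment vector \((x_{ij})_j\) of an LP center \(i\). Each such piece has exactly the color proportions of LP cluster \(i\), and mixtures of fair pieces remain fair because the fairness constraints are linear and homogeneous. For the distance bound, a new center \(c\in\CDS\) may take a piece of LP center \(i\) if \(d(c,i)\le \dsfactorcenter R\); a point \(j\) assigned to \(i\) then satisfies \(d(j,c)\le R+\dsfactorcenter R\).

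Handling (a) is the main obstacle, since two DS centers may compete for the same LP centers. We argue as follows. Each \(c\in\CDS\) is itself a point, so \(c\) has LP mass on centers \(i\) with \(d(c,i)\le R\le \dsfactorcenter R\), and these \(i\) have total opening at least \(\sum_i x_{ic}=1\). We then set up a bipartite flow: every LP center \(i\) supplies \(y_i\), an edge joins \(i\) to \(c\) when \(d(i,c)\le R\), and every \(c\) demands a small positive amount. Since \(\sum_i x_{ic}=1\) for each \(c\), we can simply give each \(c\) the fraction \(x_{ic}/y_i\) of LP cluster \(i\). Summed over all \(c\), these fractions are at most \(\sum_c x_{ic}/y_i\le 1\), so the split never exceeds the available mass, and each \(c\) receives total positive mass. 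Any mass of LP center \(i\) that is still unassigned goes to the nearest \(c\in\CDS\). That center is within \(\dsfactorcenter R\) of \(i\), because \(i\) is a point and hence lies within \(\dsfactorcenter R\) of \(\CDS\). After this step every \(c\) is active, fair, and of radius \((\dsfactorcenter+1)R\).

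Finally, we round the fractional assignment \(x'_{cj}\) to an integral one with a max-flow or rounding argument in the style of \berceaetal{} and \bera{}. Build the network source \(\to\) point \(j\) \(\to\) pair \((c,h)\) with \(j\in P_h\) \(\to\) center \(c\) \(\to\) sink. Impose the lower and upper capacities \(\lfloor\cdot\rfloor,\lceil\cdot\rceil\) of the fractional flows on the pair edges \((c,h)\to c\) and on the edges \(c\to\)sink, and use only edges with \(x'_{cj}>0\). The fractional solution is feasible, so by integrality of flows with bounds there is an integral flow. Each color count changes by less than \(1\) and each cluster size by less than \(1\). The two rounding errors combine to an additive GF-violation of at most \(2\), and clusters whose fractional size was at least \(1\) remain nonempty. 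Checking this last part, ensuring each \(c\) keeps at least one point (for instance by forcing the mass on \(c\) itself to be \(\ge 1\), or by reserving \(c\)'s own assignment), is the remaining delicate detail. The radius stays \((\dsfactorcenter+1)R\), which gives the theorem.
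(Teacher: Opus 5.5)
There is a genuine gap in your step (a), and it is the step where the two solutions are combined. Your framework otherwise matches the paper: DS centers, a fractional GF solution, moving whole scaled copies of LP clusters, and floor/ceiling max-flow rounding.

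The claim $\sum_{c\in\CDS} x_{ic}/y_i\le 1$ does not follow from the LP. Constraint (LP-1.2) bounds each $x_{ic}$ by $y_i$ separately, and nothing stops several diverse centers from each sending up to $y_i$ to the same LP center. For example, take $y_i=1$ with two centers $c_1,c_2\in\CDS$ both fully assigned to $i$. Your rule gives each of them a full copy of LP cluster $i$, so every point of that cluster gets total mass $2$ and (LP-1.1) fails.

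Actually solving the small-demand bipartite flow you mention does not rescue this. It only gives each $c$ some positive mass, possibly below $1$. The lower bound $\lfloor\sum_j x'_{cj}\rfloor$ on the edge from $c$ to the sink can then be $0$, so the integral flow may leave $c$ empty and diverse center selection is lost. This is the ``delicate detail'' you leave open. Your first remedy (mass at least $1$ on $c$) is exactly what is needed, but you do not show how to obtain it.

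The missing idea is the right normalization of the split. For an LP center $p$ that receives mass from $\CDS$, distribute its whole cluster among those diverse centers with weights $x_{pc}/\sum_{c'\in\CDS}x_{pc'}$. These weights sum to exactly $1$, so nothing is over-allocated. Since $\sum_j x_{pj}\ge\sum_{c'\in\CDS}x_{pc'}$, each $c$ receives at least $x_{pc}$ from $p$, hence at least $\sum_p x_{pc}=1$ in total. These pieces have radius at most $2R\le(\dsfactorcenter+1)R$.

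Weights $x_{pc}/\sum_j x_{pj}$, with the unclaimed remainder of $p$'s cluster sent to the nearest diverse center as in the paper's $k$-median rerouting, also work. LP centers not reached by $\CDS$ go entirely to their nearest diverse center, as you propose.

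With mass at least $1$ at every $c$, the floor lower bound on the edge from $c$ to the sink is at least $1$. So every diverse center keeps a point after rounding, and the rest of your argument goes through with $\dsfactorcenter=3$.
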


The second result extends the techniques of the $k$-center algorithm to obtain approximation algorithms for the doubly constrained fair \(k\)-median problem and the doubly constrained fair \(k\)-means problem.

\begin{restatable}{theorem}{thmmedmeans}\label{thm:medmeans}
    There exists a polynomial time algorithm that computes a
    \begin{itemize}
        \item \((\dsfactormed+3)\)-approximation with GF-violation of at most \(2\) for the doubly constrained fair \emph{\(k\)-median} problem, where \(\dsfactormed\) is the approximation factor of a given approximation algorithm for DS-fair \(k\)-median. 
        \item \((\sqrt{1+(\sqrt{\dsfactormeans}+1)^2}+1)^2\)-approximation with GF-violation of at most \(2\) for the doubly constrained fair \emph{\(k\)-means} problem, where \(\dsfactormeans\) is the approximation factor of a given approximation algorithm for DS-fair \(k\)-means.
    \end{itemize}
\end{restatable}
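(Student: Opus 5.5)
We follow the framework described in the Techniques paragraph, with costs tracked additively rather than through a radius threshold.

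\emph{Step 1: the two input solutions.} We run the given DS-approximation algorithm to obtain a diversity-aware center set \(S\) with \(|S|\le k\) and assignment \(\sigma\). Since every doubly constrained fair solution is DS-fair, \(\cost(S,\sigma)\le \dsfactormed\cdot\OPT\) for \(k\)-median, and analogously with \(\dsfactormeans\) for \(k\)-means. Independently, we solve the natural group fair LP relaxation. It has opening variables \(y_i\) with \(\sum_i y_i\le k\) and assignment variables \(x_{ij}\le y_i\) with \(\sum_i x_{ij}=1\). For every \(i\) and \(h\) it carries the fairness constraints \(\ell_h\sum_j x_{ij}\le \sum_{j\in P_h}x_{ij}\le u_h\sum_j x_{ij}\). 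Its value is \(\LP\le\OPT\).

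\emph{Step 2: rerouting the LP mass onto \(S\).} We send every fractional unit \(x_{ij}\) to some center of \(S\) and bound the cost of doing so. For a point \(j\) assigned fractionally to \(i\), the natural target is \(s(i)\), the center of \(S\) nearest to \(i\). By the triangle inequality,
\[ d(j,s(i))\le d(j,i)+d(i,s(i))\le d(j,i)+d(i,\sigma(i))\le 2d(j,i)+d(j,\sigma(j)). \]
The last step holds because \(\sigma(j)\in S\) and \(d(i,\sigma(j))\le d(i,j)+d(j,\sigma(j))\). Summed with weights \(x_{ij}\), this already gives \(2\LP+\cost(S,\sigma)\le(\dsfactormed+2)\OPT\). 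Mass moved as a whole block from \(i\) to a single center keeps the color ratios, so each new cluster stays group fair.

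The difficulty is that some \(s\in S\) may receive no mass. For each such \(s\), we take the point set \(\sigma^{-1}(s)\), which contains \(s\) itself. Following the paper's idea, we \emph{split} an LP center's incoming mass proportionally, i.e.\ uniformly across colors, so both parts keep the ratios. Concretely, we move an \(\varepsilon\)-fraction of the mass of the LP centers serving the points of \(\sigma^{-1}(s)\) to \(s\). The charge is \(d(j,s)\le d(j,i)+d(i,j')+d(j',s)\) for suitable \(j'\in\sigma^{-1}(s)\). Choosing the split fractions so that the mass moved is charged to the LP and \(\sigma\)-costs of the points of \(\sigma^{-1}(s)\) costs at most one additional \(\LP\). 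This is what I expect yields the extra \(+1\), giving \(\dsfactormed+3\) in total.

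\emph{Step 3: rounding.} The rerouted fractional solution assigns positive mass to every \(s\in S\) and satisfies GF exactly. We build a min cost flow network: source \(\to\) points \(\to\) (center, color) nodes \(\to\) centers \(\to\) sink. Capacities are the floors and ceilings of the fractional color masses and the total cluster masses. Integrality of min cost flow then gives an integral assignment of no larger cost in which each color count and each cluster size changes by less than \(1\). Each of the two roundings contributes at most \(1\) to the violation, so the GF-violation is at most \(2\). Each center keeps at least one point because its lower capacity is at least \(1\) (it can be assigned itself). DS is untouched, since the centers are exactly \(S\).

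\emph{\(k\)-means.} The same chain uses the relaxed triangle inequality, with the Cauchy--Schwarz/Minkowski form \(\sqrt{\sum x(a+b)^2}\le\sqrt{\sum xa^2}+\sqrt{\sum xb^2}\). In Step 2 this gives an \(\ell_2\)-type bound of \(\sqrt{\LP}+\sqrt{\dsfactormeans\OPT}\). Combining the LP term and this term orthogonally yields \(\sqrt{1+(\sqrt{\dsfactormeans}+1)^2}\), and the final \(+1\) followed by squaring gives the stated factor.

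\emph{Main obstacle.} The main obstacle is Step 2's activation of empty centers. A single LP center may be needed to activate several centers of \(S\), and the points of \(\sigma^{-1}(s)\) may have only tiny total mass at any one LP center. I would first try activating each empty \(s\) with mass taken from the LP centers serving the point \(s\) itself, namely the fractions \(x_{is}\). These are charged to \(d(s,i)\), and they are disjoint across different \(s\), since each point \(s\) is a distinct point of \(P\). This should keep the additional cost within one copy of \(\LP\).
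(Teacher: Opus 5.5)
\textbf{There is a genuine gap in how your Step 2 feeds Step 3.}

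The flow rounding only guarantees that a center $s\in S$ keeps at least $\floor{\sum_{j\in P}x'_{sj}}$ points. So DS survives the rounding only if every $s\in S$ carries fractional mass at least $1$. Positive mass is not enough. Your remark that $s$ ``can be assigned itself'' does not help either: nothing forces the integral flow to send any unit to $s$ once its lower bound is $0$.

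Your activation, in its final form via the fractions $x_{is}$, is applied only to centers that are empty after nearest-center routing. This causes two problems:
\begin{enumerate}
\item A center that received, say, total mass $0.3$ from that routing is never activated and can be closed by the rounding. This can happen because LP centers may carry total mass below $1$.
\item The slices you divert for activation come from LP centers whose mass was already routed to their nearest $S$-center. A previously nonempty center can therefore drop below $1$, or become empty again.
\end{enumerate}
The $\varepsilon$-fraction scheme via $\sigma^{-1}(s)$ has the same defect, since a small $\varepsilon$ gives mass below $1$. Its cost charging is also asserted (``I expect'') rather than carried out.

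\textbf{The missing idea is the paper's uniform activation.} For \emph{every} $s\in S$ and every LP center $p$ (your $i$), send to $s$ the color-proportional fraction $r^s_p=x_{ps}/\sum_{l\in P}x_{pl}$ of all mass arriving at $p$. Route only the leftover fraction $1-\sum_{s\in S}r^s_p$ to the nearest center $\theta(p)\in S$. Then each $s$ receives at least $\sum_{p}x_{ps}=1$, independently of the $\theta$-part. GF is preserved because every piece is a proportional slice.

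For the cost, a unit of $j$'s mass sent from $p$ to $s$ costs at most $d(j,p)+d(p,s)$, and the two terms are paid for differently:
\begin{itemize}
\item The term $d(j,p)$ must be charged against the bound $2d(j,p)+d(j,\sigma(j))$ that this same unit no longer pays via $\theta$. Your phrase ``charged to $d(s,i)$'' silently relies on this.
\item The second legs satisfy $\sum_{s\in S}\sum_{p}x_{ps}d(p,s)\le\LP$.
\end{itemize}
Together these give $3\LP+\cost(S,\sigma)\le(\dsfactormed+3)\OPT$. If you instead charged the activated mass on top of routing all mass via $\theta$, you would only get $\dsfactormed+4$.

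Once Step 2 is built this way, your $k$-means accounting does reproduce the stated factor. The second legs $d(p,s)$ and $d(p,\theta(p))$, summed in squares, total at most $\LP+(\sqrt{\LP}+\sqrt{\cost(S,\sigma)})^2$. The first leg $d(j,p)$ is then added by Minkowski.
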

\noindent Currently the best algorithm for center fair $k$-median computes a $7.081$-approximation \cite{thejaswi2021diversity,Krishnaswamy2018Rounding} and for $k$-means a $256$-approximation \cite{thejaswi2021diversity,Vakilian2022indivfair}, resulting in the following:

\begin{restatable}{corollary}{cormedmeans}
    There exists an algorithm that computes a
    \begin{itemize}
        \item \(10.081\)-approximation with GF-violation of at most \(2\) for the doubly constrained fair \emph{\(k\)-median} problem. 
        \item \(291+2\sqrt{290}\approx 325.06\)-approximation with GF-violation of at most \(2\) for the doubly constrained fair \emph{\(k\)-means} problem.
    \end{itemize}
\end{restatable}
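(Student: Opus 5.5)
The statement is the corollary, and it follows directly from \cref{thm:medmeans}. The plan is to plug in the best known DS-fair approximation factors and simplify the resulting expressions.

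For \(k\)-median, I take \(\dsfactormed = 7.081\), the factor obtained by combining the reduction of \thejaswi~\cite{thejaswi2021diversity} with the rounding of \cite{Krishnaswamy2018Rounding}. The first item of \cref{thm:medmeans} then gives the guarantee \(\dsfactormed + 3 = 10.081\), with GF-violation at most \(2\).

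For \(k\)-means, I take \(\dsfactormeans = 256\) from \cite{thejaswi2021diversity, Vakilian2022indivfair}. Then \(\sqrt{\dsfactormeans}+1 = 17\) and \(1 + 17^2 = 290\). The second item of \cref{thm:medmeans} therefore gives
\[
  \bigl(\sqrt{290}+1\bigr)^2 = 290 + 2\sqrt{290} + 1 = 291 + 2\sqrt{290}.
\]
Since \(\sqrt{290} \approx 17.029\), this is approximately \(325.06\), and the GF-violation is again at most \(2\).

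Two checks remain, and neither is a real obstacle.

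First, the cited DS algorithms run in polynomial time. In the general fair-range setting, the reduction of \cite{thejaswi2021diversity} enumerates feasible center-count vectors. For the exact diverse center selection assumed throughout, it reduces to a matroid (partition) constraint, which is what the cited \(7.081\) and \(256\) factors address. So the black boxes are polynomial.

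Second, the output of each DS algorithm must have the form \cref{thm:medmeans} expects, namely a DS-feasible center set whose cost is within \(\dsfactor\) of the optimal DS-fair cost. This holds by definition of those algorithms.

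Composing these black boxes with the polynomial-time algorithm of \cref{thm:medmeans} yields the claimed algorithms.
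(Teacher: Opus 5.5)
Your proposal is correct and takes the same route as the paper: the corollary follows by substituting \(\dsfactormed = 7.081\) and \(\dsfactormeans = 256\) into \cref{thm:medmeans}, and your arithmetic \(7.081+3=10.081\) and \((\sqrt{1+17^2}+1)^2 = 291+2\sqrt{290}\approx 325.06\) is right. The paper's justification is just the one-line citation of these DS-fair algorithms, so your extra remarks on running time and output form only make explicit what the paper leaves implicit.
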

\noindent For the special case of doubly constrained fair \emph{\(k\)-means} under Euclidean Distance we can use the $64$-approximation \cite{thejaswi2021diversity,zhao2025matroidmeans} to achieve an \(83+2\sqrt{82}\approx 101.11\)-approximation with GF-violation of at most \(2\).

We remark that, as we use the approximation algorithms for clustering with diverse centers as opaque-box\footnote{\url{https://www.acm.org/diversity-inclusion/words-matter}} algorithms, our results can also be applied to combine the group fairness constraint with different notions of fairness.
Such a notion would need to apply only to the set of centers, e.\,g., matroid and knapsack constraints \cite{Swamy2016matroid, zhao2025matroidmeans} or the notion of individual fairness \cite{jung2020indivfair,Vakilian2022indivfair}.

\subsection{Related work}
Besides the two specific fairness definitions described above, there is an abundance of different fairness notions. These include, but are not limited to, more constraints focusing on fair distributions of points, such as balance \cite{chierichetti2017fairlets}, bounded representation \cite{ahmadian2019Overrepresentation}, and colorful clustering \cite{DBLP:journals/mp/JiaSS22}. 
Other fairness concepts focus on distances, like individual fairness \cite{DBLP:conf/icml/MahabadiV20} and socially fair clustering \cite{DBLP:conf/fat/AbbasiBV21}.

\chierichetti~\cite{chierichetti2017fairlets} were the first to study the group fairness constraint (albeit not using this name) for the special case of two colors and a specific balance notion. \bercea~\cite{bercea2019cost} presented the first approximation algorithms for group fair clustering for general metric spaces and general bounds on the proportions of groups for an arbitrary number of groups. 
For the fair \(k\)-center problem, there is a deterministic \(3\)-approximation algorithm with additive violation of at most \(1\) \cite{bercea2019cost} and randomized \(3\)-approximation with no additive violation in expectation \cite{DBLP:conf/nips/HarbL20}.
For the exact variant, \berceaetal~\cite{bercea2019cost} provide a \(5\)-approximation algorithm with no additive violation for exact preservation of ratios, and show that it is NP-hard to approximate the assignment version of this problem, i.\,e., finding the optimal \(k\)-center assignment for a given set of centers, better than a factor of \(3\).
For \(k\)-median, \berceaetal~\cite{bercea2019cost} and \beraetal~\cite{DBLP:conf/nips/BeraCFN19} both give algorithms implying a \((4+\epsilon)\)-approximation with an additive violation of at most \(1\) for the group fairness constraint using the recent \((2+\epsilon)\)-approximation for \(k\)-median \cite{CohenAddad2025kmedian}.  
For \(k\)-means, a \(5+4\sqrt{2}+\epsilon\approx10.66\)-approximation with violation of \(\le 3\) is achieved by the algorithm from \bera~\cite{DBLP:conf/nips/BeraCFN19}, using the best known approximation for unconstrained \(k\)-means \cite{charikar2025kmeans}.
All these results assume that the groups are disjoint. Approximation algorithms for the generalization of non-disjoint groups also exist \cite{DBLP:conf/nips/BeraCFN19, DBLP:conf/nips/HarbL20}.
Further related for group fairness include the integration of privacy constraints \cite{DBLP:conf/icalp/Rosner018}, the learning-augmented setting \cite{Wu2026LearningFairkCenter}, streaming, and the distributed model \cite{DBLP:conf/www/CeccarelloPP24}.

We now look at the related work for the diverse center selection constraint.
For the \(k\)-center objective, there exists a \(3\)-approximation for exact center diversity \cite{jones2020matching}, and a \(3\)-approximation for a slight variant of diverse center selection that deals with lower and upper bounds on fractions of the actual proportions of a group in the data set \cite{nguyen2022fair}. If color groups are non-disjoint, the problem is inapproximable to any factor, assuming \(P \neq NP\) \cite{thejaswi2021diversity}.
For the most general version of diverse center selection, there exists an FPT \(3\)-approximation that can even handle outliers \cite{gadekar2026tightfptapproximationsfair}.
For \(k\)-median and \(k\)-means, a result for general \(p\)-norm objectives implies \(e^{O(1)}\)-approximations for general diverse center selection. In the exact case, \thejaswi~\cite{thejaswi2021diversity} show that the problem for the \(k\)-median objective can be reduced to the matroid median problem in polynomial time. This result can also be generalized to other objectives, implying a \(7.081\)-approximation for \(k\)-median and a \(256\)-approximation for \(k\)-means with exact diverse center selection, using the best known guarantees for the matroid version of the respective objective \cite{Krishnaswamy2018Rounding, Vakilian2022indivfair}; where the \(k\)-means variant allows for a \(64\)-approximation if the distances are Euclidean \cite{thejaswi2021diversity, zhao2025matroidmeans}. In the FPT-regime, randomized \((1+\frac{2}{e}+\epsilon)\)- and \((1+\frac{8}{e}+\epsilon)\)-approximations for general diversity-aware \(k\)-median and \(k\)-means can be achieved, respectively \cite{thejaswi2025diversityawareclusteringcomputationalcomplexity}. For Euclidean spaces, \((1+\epsilon)\)-approximations are possible for both objectives under the diverse center selection constraint \cite{zhang2024parameterized}.
Further related work in the realm of diverse center selection includes distributed models \cite{DBLP:journals/jbd/CeccarelloPPS23} and the streaming setting \cite{DBLP:journals/jbd/CeccarelloPPS23, DBLP:conf/aaai/GuoLJC26}.

For doubly constrained fair clustering, the landscape of related work is sparser. Only for \(k\)-center, there exists a constant-factor approximation algorithm \cite{dickerson2023doubly}.
For \(k\)-median, there is an FPT-time approximation algorithm by \cite{wu2025parameterized} that achieves an approximation ratio of \(4+\epsilon\) with an additive violation of at most \(1\) for the group fairness constraint.
For general \(p\)-norm objectives, there is a polynomial-time algorithm by \cite{zhang2025doubly} that achieves an approximation ratio of \(O(\Delta^{1/p})\) with an additive violation of at most \(5\), where \(\Delta\) is the maximum cluster size in the computed solution.
\Cref{tab:comparison} summarizes the results for doubly constrained fair clustering in general metric spaces, comparing our results to the state-of-the-art.

Lastly, we provide a brief overview of additional important center constraints that our algorithm can handle.
There exists a \(3\)-approximation for matroid \(k\)-center \cite{chen2016matroid}, a \(7.081\)-approximation for matroid median \cite{Krishnaswamy2018Rounding}, and a \(256\)-approximation for matroid means \cite{Vakilian2022indivfair}, which is improved to \(64\) in Euclidean spaces \cite{zhao2025matroidmeans}.
For knapsack constraints, the factors are \(3\) \cite{hochbaum1986unified}, \(6.387+\epsilon\) \cite{DBLP:journals/mp/GuptaMZ25} and \(1128+\epsilon\) \cite{zhao2025matroidmeans} for \(k\)-center, \(k\)-median and (Euclidean) \(k\)-means.

\begin{table}[t]
    \centering
    \caption{%
        Comparison of approximation algorithms for doubly constrained fair $k$-clustering in general metric spaces. All results hold for exact center diversity; some of them hold for more general cases.
        Here, $\Delta$ denotes the maximum cluster size in the computed solution, and add.
        GF viol.\ denotes the additive violation of the group fairness constraint.
        The approximation ratios for this work use the currently best diversity-aware algorithms as
        subroutines: a $3$-approximation for $k$-center~\cite{jones2020matching},
        a $7.081$-approximation for $k$-median~\cite{thejaswi2021diversity,Krishnaswamy2018Rounding},
        and a $256$-approximation for $k$-means~\cite{thejaswi2021diversity,zhao2025matroidmeans}.
    }
    \label{tab:comparison}
    \begin{tabular}{@{}llccc@{}}
        \toprule
        \textbf{Objective} & \textbf{Reference} & \textbf{Approx.\ ratio} & \textbf{Add. GF viol.} & \textbf{Runtime} \\
        \midrule
        \multirow{2}{*}{$k$-center}
          & Dickerson et al.~\cite{dickerson2023doubly}
            & $8$                       & $\le 3$ & poly \\
          & \textbf{This work}
            & $\mathbf{4}$              & $\le 2$ & poly \\
        \midrule
        \multirow{3}{*}{$k$-median}
          & Zhang et al.~\cite{zhang2025doubly}
            & $O(\Delta)$               & $\le 5$ & poly \\
          & Wu et al.~\cite{wu2025parameterized}
            & $4+\varepsilon$           & $\le 1$ & FPT  \\
          & \textbf{This work}
            & $\mathbf{10.081}$         & $\le 2$ & poly \\
        \midrule
        \multirow{2}{*}{$k$-means}
          & Zhang et al.~\cite{zhang2025doubly}
            & $O(\sqrt{\Delta})$        & $\le 5$ & poly \\
          & \textbf{This work}
            & $\mathbf{291+2\sqrt{290}\approx 325.06}$ & $\le 2$ & poly \\
        \bottomrule
    \end{tabular}
\end{table}

\section{A \texorpdfstring{\(4\)}{4}-approximation for \texorpdfstring{\(k\)}{k}-center with center diversity and group fairness constraints}\label{section:kcenter}

First, we give an overview of the algorithm, followed by a detailed description of the individual steps. 
As an initial step, compute a fractional solution \((x,y)\) to the $k$-center clustering problem with only the group fairness constraint using linear programming, where \(x\) represents a fractional assignment of points to centers, and \(y\) indicates which points receive mass from other points in \(x\), i.e., are used as centers in \((x,y)\). Furthermore, we compute a set of centers \(\CDS\) satisfying center diversity using a known approximation algorithm for this problem.
As a next step, we reroute the mass in \((x,y)\) such that only the points in \(\CDS\) are used as centers. This yields a fractional solution satisfying center diversity and group fairness. 
Lastly, we convert the fractional assignment to an integral one by using a max flow with lower bounds.
To discuss the algorithm in more detail, we divide it into four components, which are later formalized in \Cref{alg:k-center}.
\begin{enumerate}
    \item Compute a set of centers \(\CDS\) that fulfills center diversity
    \item Compute a fractional group fair solution \((x,y)\)
    \item Reroute mass in \(x\) such that only points in \(\CDS\) receive positive mass \(\to (x',y')\)
    \item Compute an integral point assignment \(\to (x'', y'')\)
\end{enumerate}

\subsection{Center diversity}

To guarantee our solution fulfills the DS constraint we use an existing algorithm for diversity-aware $k$-center as an opaque-box algorithm.
In the following, let \(\CDS\) be the returned set of centers that satisfy the DS constraint, and $\dsfactorcenter$ be the approximation factor of the algorithm.
Currently, the best known algorithm for diversity-aware $k$-center is the $3$-approximation by Jones et al. \cite{jones2020matching} with  $O(nk)$ running time. 
If the combination of group fairness with another constraint is desired, a corresponding approximation algorithm can be used instead.

\subsection{Group fairness}
We formulate a linear program to obtain an initial fractional group fair clustering. We use a matrix of variables \(x\) representing the assignment from points to centers, i.e., \(x_{ij}\) denotes the amount of mass sent from point \(j\in P\) to point \(i\in P\), i.\,e., the amount by which a point \(j\) is assigned to a center \(i\). Additionally, we compute a vector of variables \(y\), where \(y_i\) can be interpreted as the amount to which \(i\in P\) is opened as a center.
\begin{align}
    \sum\limits_{i\in P} x_{ij} &= 1 & \forall j\in P \tag{LP-1.1}\label{lp_line1} \\
    x_{ij} &\leq y_i & \forall i,j \in P \tag{LP-1.2}\label{lp_line2} \\
    \sum\limits_{i \in P} y_i & \leq k \tag{LP-1.3}\label{lp_line3} \\
    \sum\limits_{j \in P_h} x_{ij} & \leq u_h \sum\limits_{j \in P} x_{ij} & \forall h \in H, i\in P \tag{LP-1.4}\label{lp_line4}\\
    \sum\limits_{j \in P_h} x_{ij} &\geq \ell_h \sum\limits_{j\in P} x_{ij} & \forall h\in H, i\in P \tag{LP-1.5}\label{lp_line5} \\
    x_{ij} &= 0 & \forall j\in P, i\in P : d(i,j) > \lambda \tag{LP-1.6}\label{lp_line6} \\
    0 \leq x_{ij}&, y_i \leq 1 & \forall i,j \in P \tag{LP-1.7}\label{lp_line7}
\end{align}
We denote the linear program by \(\LP(\lambda)\).
Note that the \(k\)-center objective function is not linear. However, we can control the cost of a feasible solution via constraint (\ref{lp_line6}), which ensures that points are not assigned to centers farther than \(\lambda\) away. 
Combining this with a search over all possible values of \(\lambda\) allows us to drop the objective function.
The group fairness (as described in \Cref{def:groupfairwithviolation}) is enforced by constraints (\ref{lp_line4}) and (\ref{lp_line5}).

This linear program is a correct formalization of \(k\)-center clustering with group fairness constraint. To see this, we prove that any feasible \(k\)-center clustering satisfying group fairness corresponds to a feasible solution for \(\LP(\lambda)\) when \(\lambda\) is at least the objective value of the clustering.
The proof of the following two lemmas can be found in  \cref{appendix:for_sec_3}.

\begin{restatable}[]{lemma}{lemmaLPcontainsallfeasibleclusterings}
\label{lem:LP-contains-all-feasible-clusterings}
    Let \((\C,\varphi)\) be a feasible \(k\)-center clustering satisfying GF with objective value \(\lambda^{\C}\).
    Then, \((x,y)\) defined by 
    \begin{align*}
        x_{ij} =
        \begin{cases}
            1 & \text{if } i=\varphi(j)\\
            0 & \text{otherwise }
        \end{cases} \qquad\text{ and }\qquad
        y_i = 
        \begin{cases}
            1 & \text{if } i\in \C\\
            0 & \text{otherwise }
        \end{cases}
    \end{align*}
    is a feasible solution for \(\LP(\lambda^{\C})\).
\end{restatable}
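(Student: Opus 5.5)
The plan is to verify each constraint of \(\LP(\lambda^{\C})\) directly for the indicator solution \((x,y)\), using only that \((\C,\varphi)\) is a \(k\)-clustering with \(\varphi\colon P\to\C\), that every cluster \(\varphi^{-1}(c)\) is group fair (with \(\rho=0\)), and that \(\lambda^{\C}=\max_{p\in P} d(p,\varphi(p))\).

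The structural constraints come first. For (\ref{lp_line1}), each \(j\) has exactly one \(i\) with \(x_{ij}=1\), namely \(i=\varphi(j)\), so \(\sum_i x_{ij}=1\). For (\ref{lp_line2}), if \(x_{ij}=1\) then \(i=\varphi(j)\in\C\), so \(y_i=1\); otherwise \(x_{ij}=0\le y_i\). For (\ref{lp_line3}), \(\sum_i y_i=|\C|\le k\). Constraint (\ref{lp_line7}) holds trivially, since all values lie in \(\{0,1\}\). For (\ref{lp_line6}), if \(x_{ij}=1\) then \(d(i,j)=d(\varphi(j),j)\le\lambda^{\C}\), so no positive variable corresponds to a pair at distance greater than \(\lambda^{\C}\).

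The fairness constraints (\ref{lp_line4}) and (\ref{lp_line5}) are the only place needing a small observation. For a fixed \(i\), we have \(\sum_{j\in P} x_{ij}=|\varphi^{-1}(i)|\) and \(\sum_{j\in P_h}x_{ij}=|\varphi^{-1}(i)\cap P_h|\). If \(i\in\C\), let \(C=\varphi^{-1}(i)\) be its cluster; the constraints then read \(\ell_h|C|\le|C\cap P_h|\le u_h|C|\), which is exactly group fairness of \(C\) with \(\rho=0\). If \(i\notin\C\), both sides are \(0\) and the constraints hold with equality. This also covers any empty cluster of a center in \(\C\).

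I expect no real obstacle. The only care needed is the case distinction for \(i\notin\C\) (or empty clusters) in the fairness constraints, and noting that the set \(H\) in the LP is the index set of colors \(\{1,\dots,m\}\).
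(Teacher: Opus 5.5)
Your proposal is correct and follows the same route as the paper's proof: a direct constraint-by-constraint check of the indicator solution. As in the paper, you reduce (\ref{lp_line4}) and (\ref{lp_line5}) to group fairness of $\varphi^{-1}(i)$ and derive (\ref{lp_line6}) from the definition of $\lambda^{\C}$; your explicit handling of $i\notin\C$, where both sides vanish, spells out a case the paper leaves implicit.
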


\begin{restatable}[]{lemma}{lemmaIntegralLPsolutionsareclustering}
\label{lem:Integral-LP-solutions-are-clusterings}
    Let \((x,y)\) be a feasible integral solution for \(\LP(\lambda)\). Then, \((\C,\varphi)\) defined by 
    \[ \C=\{i\in P\mid y_i=1\}\quad \text{ and }\quad \varphi(j) = \arg\max_{i\in P}x_{ij}\ \ \text{ for all } j\in P \]
    is a feasible \(k\)-center clustering satisfying group fairness with objective value \(\lambda\).
\end{restatable}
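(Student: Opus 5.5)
The statement to prove is \Cref{lem:Integral-LP-solutions-are-clusterings}: an integral feasible solution of \(\LP(\lambda)\) induces a feasible group fair \(k\)-center clustering of cost at most \(\lambda\). The plan is to read off the combinatorial structure directly from the constraints, one at a time.

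\textbf{Step 1: \(\varphi\) is a well-defined assignment into \(\C\).} Since \((x,y)\) is integral, \eqref{lp_line7} gives \(x_{ij},y_i\in\{0,1\}\). For each \(j\), constraint \eqref{lp_line1} then forces exactly one \(i\) with \(x_{ij}=1\). Hence \(\arg\max_i x_{ij}\) is this unique \(i\), and \(\varphi\) is well defined. By \eqref{lp_line2}, \(1=x_{\varphi(j)j}\le y_{\varphi(j)}\), so \(y_{\varphi(j)}=1\) and \(\varphi(j)\in\C\).

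\textbf{Step 2: at most \(k\) centers.} By \eqref{lp_line3}, \(|\C|=\sum_i y_i\le k\). If the definition of a clustering requires every center to be active, I would replace \(\C\) by \(\varphi(P)\subseteq\C\). This changes neither the assignment nor the cost, and still has at most \(k\) centers.

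\textbf{Step 3: cost at most \(\lambda\).} For every \(j\) we have \(x_{\varphi(j)j}=1\neq 0\), so constraint \eqref{lp_line6} gives \(d(j,\varphi(j))\le\lambda\). Taking the maximum over \(j\), the \(k\)-center objective is at most \(\lambda\). I read ``objective value \(\lambda\)'' as ``at most \(\lambda\)'', which is the sense used by the threshold search over \(\lambda\).

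\textbf{Step 4: group fairness.} Fix \(i\in\C\) and let \(C=\varphi^{-1}(i)\). By Step 1, \(x_{ij}=\mathbbm{1}[j\in C]\). Therefore \(\sum_{j\in P}x_{ij}=|C|\) and \(\sum_{j\in P_h}x_{ij}=|C\cap P_h|\). Constraints \eqref{lp_line4} and \eqref{lp_line5} then read \(\ell_h|C|\le|C\cap P_h|\le u_h|C|\), which is exactly \Cref{def:groupfairwithviolation} with \(\rho=0\). Clusters with \(C=\emptyset\) are dropped as in Step 2.

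\textbf{Main obstacle.} There is no real obstacle; every step is a direct translation of one constraint. The only care points are two bookkeeping matters: \(\varphi\) must be well defined via uniqueness from \eqref{lp_line1} together with integrality, and the cost must be interpreted as an upper bound \(\le\lambda\) rather than an equality.
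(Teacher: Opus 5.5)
Your proposal is correct and follows the paper's proof essentially step for step: $\varphi$ is well defined by (\ref{lp_line1}) and integrality, $|\C|\le k$ follows from (\ref{lp_line3}), group fairness follows from (\ref{lp_line4}) and (\ref{lp_line5}) via $\sum_{j\in P}x_{ij}=|\varphi^{-1}(i)|$ and $\sum_{j\in P_h}x_{ij}=|\varphi^{-1}(i)\cap P_h|$, and the radius bound follows from (\ref{lp_line6}). You are slightly more careful than the paper, which never checks $\varphi(j)\in\C$ via (\ref{lp_line2}) and writes ``objective value $\lambda$'' where only the upper bound $\le\lambda$ actually follows.
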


We choose the value of \(\lambda\) as follows. 
It should be large enough such that \(\LP(\lambda)\) allows for a feasible solution. For this, let \(\lambda^{\LP}\) denote the minimal value such that \(\LP(\lambda^{\LP})\) has a solution. 
Further, \(\lambda\) should be large enough such that the \(\dsfactorcenter\)-approximation for \(k\)-center with DS has cost upper bounded by \(\dsfactorcenter\cdot\lambda\).
For this, let \(\lambda^{\text{DS}}\) be minimal such that the \(\dsfactorcenter\)-approximation returns a solution with exactly \(k\) centers and value of at most \(\dsfactorcenter\cdot\lambda^{\text{DS}}\). 
Now, we set \(\lambda \coloneqq \max\{\lambda^{\LP}, \lambda^{\text{DS}}\}\).
We can upper bound the value of \(\lambda\) by the value of an optimal solution satisfying both GF and DS.
\begin{restatable}[]{lemma}{lemmaFeasibility}
% \begin{lemma}
    Let \(\OPT_{\text{GF+DS}}\) be the value of an optimal \(k\)-center solution satisfying GF and DS.
    Then, \(\lambda\coloneqq \max\{\lambda^{\LP}, \lambda^{\text{DS}}\} \le \OPT_{\text{GF+DS}}\).
\end{restatable}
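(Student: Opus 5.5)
The plan is to bound \(\lambda^{\LP}\) and \(\lambda^{\text{DS}}\) separately by \(\OPT_{\text{GF+DS}}\); since \(\lambda\) is their maximum, this gives the claim. Fix an optimal clustering \((\C^*,\varphi^*)\) for \(k\)-center satisfying both GF and DS, with objective value \(\OPT_{\text{GF+DS}}\).

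For \(\lambda^{\LP}\), we use that \((\C^*,\varphi^*)\) is in particular a feasible \(k\)-center clustering satisfying GF. By \Cref{lem:LP-contains-all-feasible-clusterings}, the integral vector \((x,y)\) induced by \((\C^*,\varphi^*)\) is feasible for \(\LP(\OPT_{\text{GF+DS}})\). The constraint (\ref{lp_line6}) is monotone in \(\lambda\): increasing \(\lambda\) only removes forced zeros, so the feasible region only grows. Since \(\lambda^{\LP}\) is the minimal \(\lambda\) for which \(\LP(\lambda)\) is feasible, we get \(\lambda^{\LP}\le \OPT_{\text{GF+DS}}\).

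For \(\lambda^{\text{DS}}\), we use that \(\C^*\) satisfies DS and \(\varphi^*\) is a valid assignment. Hence the optimum of DS-fair \(k\)-center, \(\OPT_{\text{DS}}\), satisfies \(\OPT_{\text{DS}}\le \OPT_{\text{GF+DS}}\), because dropping the GF constraint can only decrease the optimum. The \(\dsfactorcenter\)-approximation returns a solution of value at most \(\dsfactorcenter\cdot\OPT_{\text{DS}}\le \dsfactorcenter\cdot \OPT_{\text{GF+DS}}\). By the minimality in the definition of \(\lambda^{\text{DS}}\), this gives \(\lambda^{\text{DS}}\le \OPT_{\text{GF+DS}}\).

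The main subtlety is the definition of \(\lambda^{\text{DS}}\) as a threshold tied to the algorithm's behaviour. Many \(k\)-center approximations, such as that of \jones{}, are threshold-based: they are run with a guess \(\tau\) and either return a solution of cost \(\le \dsfactorcenter\tau\) or certify that \(\tau<\OPT_{\text{DS}}\). I would interpret \(\lambda^{\text{DS}}\) as the minimal candidate distance \(\tau\) for which the algorithm succeeds. Correctness of the algorithm then guarantees success at \(\tau=\OPT_{\text{DS}}\), and therefore \(\lambda^{\text{DS}}\le\OPT_{\text{DS}}\).

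The requirement that the algorithm return exactly \(k\) centers also needs care. If fewer than \(k\) centers are returned, we pad the solution with arbitrary points while respecting DS. This is possible because \(\sum_h L_h\le k\le\sum_h U_h\), and adding centers does not increase the cost. Both bounds are taken over the finite set of pairwise distances, so all minima exist.
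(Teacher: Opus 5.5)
Your proof is correct and follows essentially the same route as the paper's: \(\lambda^{\LP}\) is bounded via \Cref{lem:LP-contains-all-feasible-clusterings}, and \(\lambda^{\text{DS}}\) via the approximation guarantee combined with \(\OPT_{\DS}\le\OPT_{\text{GF+DS}}\); the only cosmetic difference is that you apply the lemma directly to the doubly fair optimum instead of going through \(\OPT_{\GF}\). Your last two paragraphs are unnecessary, because the paper defines \(\lambda^{\text{DS}}=\min\{r\in R\mid r\ge \cost_{\DS}/\dsfactorcenter\}\) over the set \(R\) of pairwise distances, so your third paragraph already suffices (and the padding remark would in any case need enough non-center points in colors with slack, not merely \(\sum_h L_h\le k\le \sum_h U_h\)).
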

\begin{proof}
    Let \(\OPT_{\GF}\) and \(\OPT_{\DS}\) denote the values of an optimal \(k\)-center clustering satisfying GF and DS, respectively.
    By \Cref{lem:LP-contains-all-feasible-clusterings}, \(\LP(\OPT_{\GF})\) contains a feasible solution. This implies \(\lambda^{\LP}\le \OPT_{\GF}\).
    Let \(\cost_{\DS}\) denote the cost of the solution computed by the \(\dsfactorcenter\)-approximation for \(k\)-center with DS. It is \(\cost_{\DS}\le \dsfactorcenter\cdot\OPT_{\DS}\).
    We set \(\lambda^{\text{DS}}\) to the minimal value such that \(\cost_{\DS}\le \dsfactorcenter\cdot \lambda^{\DS}\). Then, \(\lambda^{\DS}\le \OPT_{\DS}\).

    Any solution satisfying both GF and DS is also a feasible solution for GF or DS, respectively. Hence, \(\OPT_{\GF} \le \OPT_{\GF+\DS}\) and \(\OPT_{\DS}\le \OPT_{\GF+\DS}\).
    Putting everything together yields \(\max\{\lambda^{LP},\lambda^{\DS}\} \le \max\{\OPT_{\GF},\OPT_{\DS}\} \le \OPT_{\GF+\DS}\).
\end{proof}

\subsection{Rerouting fractional assignments}\label{sec:rerouting}
\begin{figure}
    \centering
    % figure showing the reassignment of mass of a center from the DS-solution
\tikzset{
	every picture/.style={line width=1pt},
	point/.style args={#1}{circle, draw=#1, fill=#1, minimum size=4pt, inner sep=0pt},
	center/.style args={#1}{diamond, draw=#1, fill=#1, line width= .5, minimum size=6.5pt, inner sep=0pt},
	outlier/.style args={#1}{circle, draw=#1, minimum size=4pt, inner sep=0pt},
    assign/.style={->,shorten >=.5pt, shorten <= .5pt},
    reassign/.style={dashed, ->,shorten >=.5pt, shorten <= .5pt, blue},
    rerouting/.pic={
        % \coordinate (p) at (0,0);
        \node[point=black, label=below:\(p\)] (p) at (0,0) {};
        \node[center=black, label=left:\(i_1\)] (i1) at ($(p)+(-1.6,-.7)$) {};
        \node[center=black, label=right:\(i_2\)] (i2) at ($(p)+(2,-.9)$) {};
        \node[point=black, label=above right:\(j\)] (j) at ($(p)+(0,3)$) {};
        \node[point=black, label=left:\(q\)] (q) at ($(j)+(-2,0)$) {};
        
        % \draw[draw=gray, line width=1pt] (i1) circle (2.2cm);
        % \draw[draw=gray, line width=1pt] (i2) circle (2.6cm);
        % \node at ($(i1)+(-.2,-1.8)$) {\textcolor{gray}{\(N(i_1)\)}};
        % \node at ($(i2)+(.4,-2)$) {\textcolor{gray}{\(N(i_2)\)}};
    },
}

\begin{tikzpicture}
    \pic{rerouting};
    
    \draw[assign] (i1) to node[above] {\(x_{pi_1}\)}  (p);
    \draw[assign] (i2) to node[above] {\(x_{pi_2}\)}  (p);
    \draw[assign] (j) to node[below right] {\(x_{pj}\)} (p);
    \draw[assign] (j) to node[below] {\(x_{qj}\)} (q);
    \draw[reassign] (j) to node[left] {\(x'_{i_1j} = \frac{x_{pi_1}}{x_{pi_1}+x_{pi_2}}x_{pj}+x_{qj}\)} (i1); 
    \draw[reassign] (j) to node[right] {\(x'_{i_2j} = \frac{x_{pi_2}}{x_{pi_1}+x_{pi_2}}x_{pj}\)} (i2); 
\end{tikzpicture}
    \caption{Rerouting the mass outgoing at a point \(j\). There are two centers \(i_1,i_2\in \CDS\). The point \(j\) covers both cases: there is a point \(p\in N(\CDS)\) and a point \(q\not\in N(\CDS)\) that \(j\) sends mass to. 
    The mass \(x_{pj}\) is split and rerouted to the two centers \(i_1, i_2\in \CDS\) proportionally to the mass they send to \(p\). Center \(i_{c}\) gets an \(\frac{x_{pi_{c}}}{x_{pi_1}+x_{pi_2}}\)-fraction of \(x_{pj}\) for \(c=1,2\). 
    Let us assume that \(i_1=\argmin_{i\in \CDS}d(q,i)\). Then, the mass \(x_{qj}\) is fully rerouted to \(i_1\).
    The original mass assignments are indicated using solid black lines; the rerouted mass assignments are shown with \textcolor{blue}{blue} dashed lines. Omitted edges correspond to \(0\) assignments.}
    \label{fig:rerouting}
\end{figure}
Let \((x,y)\) be the (fractional) group fair solution computed by solving the LP optimally, and \(\CDS\) be the set of diverse centers computed by a \(\dsfactorcenter\)-approximation for \(k\)-center with DS constraint. 
If \(x_{ij}>0\) for some \(j\in P\), the point \(i\) can be considered as a center in this solution. 
We want to reroute mass to the centers in \(\CDS\), such that positive mass is only sent to points in \(\CDS\). Meanwhile, we want to ensure that all centers in \(\CDS\) get some positive mass.
That is, we construct \((x',y')\) such that \(\{i\in P \mid \sum_{j\in P}x'_{ij}>0\} = \CDS\), while ensuring that the cost does not increase too much.
In this context, we need the notion of a point's \emph{neighborhood}. For a point \(j\in P\), we define the set \(N(j) \coloneqq \{i\in P \mid x_{ij} > 0\}\) of all the points that \(j\) sends positive mass to (i.\,e., is assigned to in the fractional solution). For a set of points \(P'\subseteq P\), we define \(N(P') \coloneqq \bigcup_{j\in P'}N(j)\). 

The overall idea is to take all the mass incoming at some point \(p\in P\) in the LP solution \((x,y)\) and reroute it to a ``nearby'' center in \(\CDS\) if this is possible. 
We have to be careful that this does not change the color proportions. Then, as \((x,y)\) fulfills group fairness, the newly formed clusters are also group fair.
Consider a point \(p\in P\) that gets some mass in the current solution, i.e., \(\sum_{j\in P}x_{pj}>0\).
Depending on \(p\)'s position relative to the centers in \(\CDS\), we decide how to reroute the incoming mass at \(p\). 

We want to prevent any center in \(\CDS\) from becoming empty (i.e., it gets no mass assigned).
By the first constraint of the LP, \(\sum_{i\in P}x_{ij}=1\), every point, particularly every center in \(\CDS\), must possess some outgoing mass in the LP solution \((x,y)\). That is, for every \(c\in \CDS\), the set \(N(c)\) is non-empty.
To ensure that mass is sent to a point \(c\in\CDS\), we reroute mass sent to a point \(p\in N(c)\) to \(c\).
However, these sets might overlap, and if we sent all incoming mass at \(p\in N(c)\) to \(c\), there might not be any mass left to send to some other center that has \(p\) in its neighborhood.
To circumvent this problem, we proportionally split the mass sent to a point \(p\in N(\CDS)\) among all points \(c\in \CDS\) with \(p\in N(c)\).
A center \(i\in \CDS\) gets as much mass from \(j\in P\) with \(x_{p,j}>0\) as it sends to \(p\), proportional to the masses sent from all centers \(i'\in \CDS\) to \(p\).

In the remaining case that \(p\not\in N(\CDS)\), \(p\) gets mass solely from points that are not desired centers themselves. In this case, we can safely reroute all mass \(\sum_{j\in P}x_{pj}\) to some nearby center in \(\CDS\). To keep the distances small, we choose the center \(i\in \arg\min_{i'\in \CDS} d(p,i')\) that is closest to \(p\). 
Formally, a new fractional solution \((x',y')\) is defined as follows:
\begin{align}\tag{rerouting-\(y\)}\label{eq:rerouting-y}
    y'_i = \begin{cases}
        1 & \text{if } i\in \CDS\\
        0 & \text{otherwise}
    \end{cases}
\end{align}
and
\begin{align} \tag{rerouting-\(x\)}\label{eq:rerouting-x}
    x'_{ij} =
    \begin{cases}
        \sum_{p\in N(i)} \frac{x_{pi}}{\sum_{c\in \CDS\colon p\in N(c)}x_{pc}}x_{pj} + \sum_{p\in \theta^{-1}(i)}x_{pj} & \text{ if } i\in \CDS\\
        0 & \text{ otherwise,}
    \end{cases}
\end{align}
where \(\theta\colon P\setminus N(\CDS) \to \CDS\) with 
\[ 
\theta(p) = 
\begin{cases}
    p & \text{ if } p\in \CDS\\
    i & \text{ with } i \in \argmin_{i'\in \CDS}d(p,i'), \text{ otherwise}
\end{cases}
\]

\Cref{fig:rerouting} schematically visualizes how the mass emitted from a point \(j\) is rerouted to centers in \(\CDS\). 
We can show that the rerouted solution \((x',y')\) is a feasible solution to \(\LP((\dsfactorcenter+1)\lambda)\) such that \(y'\) is integral. Further, every \(c\in \CDS\) gets a mass of at least \(1\) assigned.
We defer the proof to \cref{appendix:for_sec_3}.

\begin{restatable}[]{lemma}{lemmaFirstRerouting}
\label{lem:rerouting}
    Let \((x,y)\) be a feasible solution for \(\LP(\lambda)\) and let \((x',y')\) be the solution achieved through the rerouting described above. Then the following properties hold:
    \begin{enumerate}
        \item \((x',y')\) is a feasible solution to \(\LP((\dsfactorcenter+1)\lambda)\)
        \item \(\sum_{j\in P}x'_{ij}\ge 1\) for all \(i\in \CDS\)
        \item \(y'\) is integral.
    \end{enumerate}
\end{restatable}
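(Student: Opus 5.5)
The approach is to rewrite the rerouting (\ref{eq:rerouting-x}) as a convex redistribution of each LP cluster. For every point \(p\) with positive incoming mass, define weights \(w_{p,i}\ge 0\) for \(i\in\CDS\). If \(p\in N(\CDS)\), set \(w_{p,i}=x_{pi}/\sum_{c\in\CDS\colon p\in N(c)}x_{pc}\) for \(i\) with \(p\in N(i)\), and \(w_{p,i}=0\) otherwise. If \(p\notin N(\CDS)\), set \(w_{p,\theta(p)}=1\) and all other weights to \(0\). Since \(N(\CDS)\) and the domain \(P\setminus N(\CDS)\) of \(\theta\) are disjoint, this is well defined and \(\sum_{i\in\CDS}w_{p,i}=1\) for every such \(p\). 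By construction \(x'_{ij}=\sum_{p}w_{p,i}\,x_{pj}\) for \(i\in\CDS\), and \(x'_{ij}=0\) otherwise. All three claims are then checked against this representation.

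\emph{Feasibility.} For (\ref{lp_line1}), swap sums: \(\sum_i x'_{ij}=\sum_p x_{pj}\sum_i w_{p,i}=\sum_p x_{pj}=1\). Hence \(x'_{ij}\le 1=y'_i\) for \(i\in\CDS\), while \(x'_{ij}=0=y'_i\) otherwise, giving (\ref{lp_line2}) and (\ref{lp_line7}). Constraint (\ref{lp_line3}) holds since \(\sum_i y'_i=|\CDS|\le k\). For the fairness constraints (\ref{lp_line4}) and (\ref{lp_line5}): for fixed \(i\in\CDS\) and color \(h\),
\[
\sum_{j\in P_h}x'_{ij}=\sum_p w_{p,i}\sum_{j\in P_h}x_{pj}\le \sum_p w_{p,i}\,u_h\sum_{j\in P}x_{pj}=u_h\sum_{j\in P}x'_{ij}.
\]
This works because the constraints are linear and homogeneous and the weights are nonnegative; the lower bound is identical. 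Property~3 is immediate from (\ref{eq:rerouting-y}).

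\emph{Distance constraint (\ref{lp_line6}).} Suppose \(x'_{ij}>0\). Then some \(p\) has \(x_{pj}>0\) and \(w_{p,i}>0\), and \(d(j,p)\le\lambda\) by (\ref{lp_line6}) for \(x\). If \(p\in N(\CDS)\), then \(x_{pi}>0\), so \(d(p,i)\le\lambda\) and \(d(j,i)\le 2\lambda\le(\dsfactorcenter+1)\lambda\), using \(\dsfactorcenter\ge1\). Otherwise \(i=\theta(p)\) is a closest center of \(\CDS\) to \(p\). Since the DS solution has radius at most \(\dsfactorcenter\lambda^{\DS}\le\dsfactorcenter\lambda\), we get \(d(p,i)\le\dsfactorcenter\lambda\). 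For \(p\in\CDS\) this distance is \(0\). The triangle inequality gives \(d(j,i)\le(\dsfactorcenter+1)\lambda\).

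\emph{Property~2.} This is the step I expect to be the real point, namely that every center of \(\CDS\) stays active with mass at least \(1\). Drop the \(\theta\)-part and write \(S_p=\sum_{c\in\CDS\colon p\in N(c)}x_{pc}\). Then
\[
\sum_j x'_{ij}\ge\sum_{p\in N(i)}\frac{x_{pi}}{S_p}\sum_{j\in P}x_{pj}.
\]
The key observation is that \(\sum_{j\in P}x_{pj}\ge S_p\), because \(S_p\) sums \(x_{pc}\) over a subset of sources \(c\in P\). Hence the bound is at least \(\sum_{p\in N(i)}x_{pi}=\sum_{p\in P}x_{pi}=1\) by (\ref{lp_line1}) applied to \(j=i\). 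The denominators are positive, since \(p\in N(i)\) gives \(S_p\ge x_{pi}>0\).
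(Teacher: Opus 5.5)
Your proof is correct and follows essentially the same approach as the paper. Writing the rerouting as convex weights $w_{p,i}$ is only a cleaner packaging of the same arguments: swapping sums for (\ref{lp_line1}), using homogeneity of the fairness constraints, a two-case triangle inequality for (\ref{lp_line6}), and the subset-sum bound $\sum_{j\in P}x_{pj}\ge S_p$ for Property~2. You also state explicitly the step $2\lambda\le(\dsfactorcenter+1)\lambda$ via $\dsfactorcenter\ge 1$, which the paper leaves implicit.
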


\subsection{Final fair assignment}\label{sec:finalAssign}

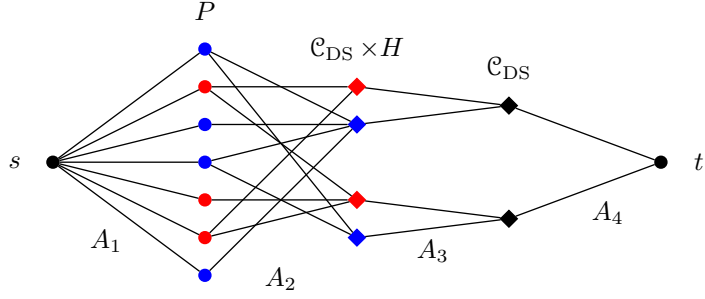
\begin{figure}
    \centering
    \begin{tikzpicture}
        \def \scalefactor{0.5}
        \def \breadthfactor{2}
        \node[point] (start) at (0,0) {};
        
        \node[point=blue] (b1) at (\breadthfactor*1,\scalefactor*3) {};
        \node[point=red] (r1) at (\breadthfactor*1,\scalefactor*2) {};
        \node[point=blue] (b2) at (\breadthfactor*1,\scalefactor*1) {};
        \node[point=blue] (b3) at (\breadthfactor*1,\scalefactor*0) {};
        \node[point=red] (r2) at (\breadthfactor*1,-\scalefactor*1) {};
        \node[point=red] (r3) at (\breadthfactor*1,-\scalefactor*2) {};
        \node[point=blue] (b4) at (\breadthfactor*1,-\scalefactor*3) {};

        \node[center=red] (cr1) at (\breadthfactor*2,\scalefactor*2) {};
        \node[center=blue] (cb1) at (\breadthfactor*2,\scalefactor*1) {};
        \node[center=red] (cr2) at (\breadthfactor*2,-\scalefactor*1) {};
        \node[center=blue] (cb2) at (\breadthfactor*2,-\scalefactor*2) {};

        \node[center] (c1) at (\breadthfactor*3,\scalefactor*1.5) {};
        \node[center] (c2) at (\breadthfactor*3,-\scalefactor*1.5) {};

        \node[point] (end) at (\breadthfactor*4,0) {};

        \foreach \p in {b1,b2,b3,b4,r1,r2,r3}
        {
            \draw[nwedge] (start) -- (\p);
        }
        \foreach \p in {b1,b2,b3,b4}
        {
            \draw[nwedge] (\p) -- (cb1);
        }
        \foreach \p in {r1,r3}
        {
            \draw[nwedge] (\p) -- (cr1);
        }
        \foreach \p in {r1,r2,r3}
        {
            \draw[nwedge] (\p) -- (cr2);
        }
        \foreach \p in {b1,b3}
        {
            \draw[nwedge] (\p) -- (cb2);
        }
        \foreach \c in {cb1,cr1}
        {
            \draw[nwedge] (\c) -- (c1);
        }
        \foreach \c in {cb2,cr2}
        {
            \draw[nwedge] (\c) -- (c2);
        }
        \foreach \c in {c1,c2}
        {
            \draw[nwedge] (\c) -- (end);
        }
        \node at ($(start)+(-0.5,0)$) {\(s\)};
        \node at ($(b1)+(0,0.5)$) {\(P\)};
        \node at ($(cr1)+(0,0.5)$) {\(\CDS\times H\)};
        \node at ($(c1)+(0,0.5)$) {\(\CDS\)};
        \node at ($(end)+(0.5,0)$) {\(t\)};
        \node at ($0.5*(start)+0.5*(b4)+(-.3,-.3)$) {\(A_1\)};
        \node at ($0.5*(b4)+0.5*(cb2)+(0,-.3)$) {\(A_2\)};
        \node at ($0.5*(cb2)+0.5*(c2)+(0,-.3)$) {\(A_3\)};
        \node at ($0.5*(c2)+0.5*(end)+(.3,-.3)$) {\(A_4\)};
    \end{tikzpicture}
    \caption{Flow network to find the final assignment of points to centers. Every point in $p \in P$ has a node representing it, which has an edge in $A_1$ to the node $s$. 
    In the figure, these nodes are colored \textcolor{blue}{blue} or \textcolor{red}{red} to represent the color $h_p$ of $p$. 
    Every such node has an edge in $A_2$ to every node corresponding to a pair of a center $c \in \CDS$ and the color $h_p$, i.\,e., $(c, h_p) \in \CDS \times H$, which are colored accordingly in the figure. 
    Each pair of center and color $(c, h_p)$ has an edge in $A_3$ to a node representing the center $c$ in the pair. 
    Each of these nodes corresponding to a center $c \in \CDS$ has an edge in $A_4$ connecting it to the node $t$. }
    \label{fig:maxflow}
\end{figure}
After the rerouting step (see \Cref{sec:rerouting}), we have an assignment that is group fair and satisfies the diverse center selection constraint. However, the assignment \(x'\) is still fractional. To achieve the final integral assignment of points to centers, we build a network graph and solve a max flow problem with lower bounds on it, similar to \cite{ahmadian2019Overrepresentation}. This is done as follows:

Let \(G=(V,E)\), where \(V = \{s,t\} \cup P \cup \{(i,h) \mid i \in \CDS, h \in H \} \cup \CDS\) and \(E = A_1 \cup A_2 \cup A_3 \cup A_4\). 
The set of edges \(A_1= \{(s, j)\mid j \in P\}\) connects all points in \(P\) to the source. Each point \(j \in P\) is connected to a center-color-node \((i,h)\), if \(j \in P_h\) and \(x'_{ij} = 1\). 
More formally \(A_2= \{(j, (i, h)) \mid i\in \CDS, j \in P_h, x'_{ij} > 0\}\). The edges in \(A_1\) and \(A_2\) have capacity 1.
Each of the color-center-nodes \((i,h)\) is connected to its corresponding center \(i\), thus \(A_3= \{((i, h), i)\mid i\in \CDS, h\in H\}\). 
The capacities on these edges ensure that each center receives (nearly) the same amount of flow of each color as it does in the fractional group fair solution. Since \(\sum_{j\in P_h}x^\prime_{ij}\) is not necessarily integral, lower and upper bounds on the edges are introduced; the lower bound is \(\floor{\sum_{j\in P_h}x^\prime_{ij}}\) and the upper bound is \(\ceil{\sum_{j\in P_h}x^\prime_{ij}}\). 
Lastly, the set of edges \(A_4= \{(i, t)\mid i \in \CDS\}\) connects all center points to the sink \(t\). These edges are used to ensure that each cluster center receives (nearly) as much total flow as it received in the fractional solution, but again \(\sum_{j\in P}x^\prime_{ij}\) is not necessarily integral. Each edge has the lower bound \(\floor{\sum_{j\in P}x^\prime_{ij}}\) and the upper bound \(\ceil{\sum_{j\in P}x^\prime_{ij}}\) (see \Cref{fig:maxflow}).

The important property of a max flow we use here is that in a network with integral capacities and integral demands, there always exists a feasible assignment that is integral. Since there is a flow with value $|P|$ (using $x^\prime$), there is also an integral flow with value $|P|$. 
Let $x^{\prime\prime}$ be the integral assignment from the flow. Our final solution is given by the pair $(y^{\prime}, x^{\prime\prime})$. This step can introduce an additive violation of 2.

\begin{restatable}[Group fairness -- \(k\)-center]{lemma}{lemmaAdditiveViolkcenter}
\label{lem:finalAssign}
    For each center \(i\) and each color \(h\) it holds that \[ l_h \sum\limits_{j \in P}x''_{ij} - 2 \leq \sum\limits_{j \in P_h}x''_{ij} \leq u_h \sum\limits_{j \in P}x''_{ij} + 2.\]
\end{restatable}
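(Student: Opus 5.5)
We will compare the integral assignment \(x''\) with the fractional assignment \(x'\), which is already group fair. For a fixed center \(i\in\CDS\) and color \(h\), write \(a_h\coloneqq\sum_{j\in P_h}x'_{ij}\) and \(a\coloneqq\sum_{j\in P}x'_{ij}\); define \(a''_h\) and \(a''\) analogously for \(x''\).

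By \Cref{lem:rerouting}, \((x',y')\) is feasible for \(\LP((\dsfactorcenter+1)\lambda)\). Constraints (\ref{lp_line4}) and (\ref{lp_line5}) therefore give \(\ell_h a\le a_h\le u_h a\).

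Next we read off two deviation bounds from the flow network of \Cref{sec:finalAssign}. The flow through the edge \(((i,h),i)\in A_3\) is exactly \(a''_h\), since each unit of flow on this edge comes from a point \(j\in P_h\) assigned to \(i\). Its bounds \(\floor{a_h}\) and \(\ceil{a_h}\) give \(|a''_h-a_h|<1\). In the same way, the flow on \((i,t)\in A_4\) equals \(a''\), so \(|a''-a|<1\).

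The upper bound then follows from a short chain:
\[
a''_h < a_h+1 \le u_h a+1 < u_h(a''+1)+1 \le u_h a''+2,
\]
where the last step uses \(u_h\le 1\). The lower bound is symmetric:
\[
a''_h > a_h-1 \ge \ell_h a-1 > \ell_h(a''-1)-1 \ge \ell_h a''-2,
\]
using \(\ell_h\le 1\).

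The main obstacle is justifying that the integral flow exists and routes all of \(P\). The argument is as follows.
\begin{itemize}
\item Sending \(x'_{ij}\) units along each path \(s\to j\to(i,h_j)\to i\to t\) is a feasible fractional flow of value \(|P|\).
\item It respects the capacities on \(A_1\) and \(A_2\), because \(\sum_i x'_{ij}=1\) by (\ref{lp_line1}).
\item It meets the lower and upper bounds on \(A_3\) and \(A_4\), because these are the floors and ceilings of the actual values \(a_h\) and \(a\).
\end{itemize}
All bounds are integral, so integrality of flows with lower bounds yields an integral feasible flow of value \(|P|\). In this flow each \(j\) uses exactly one unit on \(A_1\) and hence exactly one \(A_2\) edge, so \(x''\) is a valid integral assignment. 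Its edges only go to centers \(i\) with \(x'_{ij}>0\), which also preserves the radius bound.

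Finally, the identification of the \(A_3\) and \(A_4\) flows with \(a''_h\) and \(a''\) relies on color \(h\) points reaching \(i\) only through the node \((i,h)\). This holds by construction of \(A_2\).
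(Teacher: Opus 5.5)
Your proof is correct and follows the paper's argument. You derive group fairness of \(x'\) from the LP's GF constraints, use the floor/ceiling bounds on the \(A_3\) and \(A_4\) edges, and run the same two inequality chains using \(u_h,\ell_h\le 1\); you also spell out the existence of the integral flow and the identification of the edge flows with \(a''_h\) and \(a''\) more explicitly than the paper does.

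One small slip: the middle steps \(u_h a+1<u_h(a''+1)+1\) and \(\ell_h a-1>\ell_h(a''-1)-1\) should be non-strict, since they become equalities when \(u_h=0\) resp.\ \(\ell_h=0\). This is harmless, because the first step of each chain is already strict.
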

\begin{proof}[Proof (analogous to \cite{ahmadian2019Overrepresentation})]
The full proof for \Cref{lem:finalAssign} can be found in  \Cref{appendix:for_sec_3}.   
\end{proof}

\begin{algorithm}
    \caption{Doubly constrained fair \(k\)-center}
    \label{alg:k-center}
    \Input{point set $P$, distance metric $d$, $k \in \mathbb{N}$, group fairness bounds $\ell_h, u_h \in \mathbb{N}$ for all \(h\in H\), algorithm for center constraint $\mathcal{A}_{center}$}
    \Output{ center set \(\C \) and assignment \(\varphi\)}
    \(\CDS,\cost_{\DS} \gets \) centers and objective value returned by $\mathcal{A}_{center}(P,d,k)$\\
    \(R\gets \{d(p,q)\mid p,q\in P\}\)\\
    \(\lambda^{\DS} \gets \min\{r\in R\mid r\geq \cost_{\DS}/\dsfactorcenter\} \) \\
    \(\lambda^{\GF} \gets \min\{r\in R\mid \LP(r) \text{ has a solution}\} \) \\
    \(\lambda \gets \max\{\lambda^{\LP}, \lambda^{\DS}\}\) \\
    \((x,y) \gets\) optimal (fractional) solution for \(\LP(\lambda) \) under GF constraints $\ell, u \in \mathbb{N}^{\vert H \vert}$\\
    \((x',y') \gets \) rerouting of \((x,y)\) as specified in Equations (\ref{eq:rerouting-x}) and (\ref{eq:rerouting-y})\\
    \((x'',y'') \gets \) final integral assignment via max flow as described in \Cref{sec:finalAssign} \\
    \(\C\gets \{i\in P\mid y''_i=1\}\)\\
    \For{\(j\in P\)}{
        \(\varphi(j)\gets \arg\max_{i\in P} x''_{ij}\)
    }
    \Return{\((\C,\varphi)\)}
\end{algorithm}
\noindent This construction results in \cref{thm:k-center}. We defer the full proof to \cref{appendix:for_sec_3}.
\thmcenter*

\section{An Algorithm for \texorpdfstring{\(k\)}{k}-median/\texorpdfstring{\(k\)}{k}-means with center diversity and group fairness constraints}\label{section:Kmedmeans}

In this section, we look at the doubly fair clustering problem under the $k$-median and $k$-means objective functions. The basic structure of the algorithm we use for these objectives is the same as for the $k$-center objective described in \Cref{section:kcenter}. We again use a fractional LP solution that fulfills group fairness and reroute it such that it uses centers we obtain via an opaque-box algorithm for center diversity. The resulting fractional assignment is then rounded via a min cost flow procedure.
In the following, we describe the steps of the algorithm for these cost functions and prove its correctness. For notational simplicity we describe the $k$-median cost of a solution $(x,y)$  by $\cost(x,y)=\sum_{i\in P}\sum_{j\in P}x_{ij}d(i, j)$ and the $k$-means cost by $\cost^2(x,y)=\sum_{i\in P}\sum_{j\in P}x_{ij}d(i, j)^2$.

\subsection{Center diversity}\label{sec:DSmedmeans} 
We find a center fair solution \((\CDS,\varphi)\) via an opaque-box algorithm for the respective problem. For notation reasons, we can interpret this solution as an integral assignment \((x^{DS}, y^{DS})\) with \(y^{DS}_i=1\) if and only if \(i \in \CDS\) and \(x^{DS}_{ij}=1\) iff \(\varphi(j)=i\).
Let $\dsfactormed$ and $\dsfactormeans$ be the resulting approximation factors, respectively.
The currently best available approximation algorithms for diverse center selection for these objectives are the following: 
For $k$-median with DS, Thejaswi et al. \cite{thejaswi2021diversity} give a reduction to the matroid median problem, allowing a $\rho$-approximation algorithms for the matroid median problem to be used to compute a $\rho$-approximation for $k$-median with DS.
Currently the best known algorithm for the matroid median problem is the $7.081$-approximation algorithm by Krishnaswamy et al. \cite{Krishnaswamy2018Rounding}.

The reduction by Thejaswi et al. \cite{thejaswi2021diversity} can also be modified to reduce $k$-means with DS to the facility location problem with $\ell_p$-norm cost under matroid constraint, where $p=2$, introduced by Vakilian and Yal\c{c}{\i}ner \cite{Vakilian2022indivfair}.
For this, we only need to set the opening costs of every facility to $0$ in addition to the construction of the $k$-median reduction. 
As the number of opened facilities is bounded by the partition matroids constructed in the reduction, the arguments by Thejaswi et al. apply here as well.
Vakilian and Yal\c{c}{\i}ner \cite{Vakilian2022indivfair} also give a $16^p$ approximation for the facility location problem with $\ell_p$-norm cost under matroid constraint, meaning we can use it to obtain a 256-approximation for $k$-means with DS. 
For the special case of Euclidean Distance we can also use the construction by Thejaswi et al. \cite{thejaswi2021diversity} to reduce from the matroid means problem to $k$-means with DS. 
This allows us to use the $64$-approximation algorithm for matroid means under Euclidean Distance by Zhao et al. \cite{zhao2025matroidmeans}.

\subsection{Group fairness}\label{sec:GFmedmeans}

The following relaxed LP finds a feasible assignment satisfying the GF constraint and minimizes the $k$-median objective function
\begin{align*}
&\text{minimize } &\sum_{i,j\in P} x_{\color{black}{ij}}\cdot d(i, j)\color{red}^2\tag{LP-Med/\textcolor{red}{LP-Means}}&&\\
&\text{subject to }&\sum_{i\in P}x_{ij}&= 1 &\forall j\in P\tag{LP-2.1}\label{1b}\\
&&x_{ij}&\leq y_i	&\forall  i,j\in P\tag{LP-2.2}\label{2b}&\\
&&\sum_{i\in P}y_i&\leq k\tag{LP-2.3}\label{3b}\\
&&{\sum_{x\in P^h}x_{ij}}&{\leq \alpha_h \sum_{x\in P}x_{ij}}&{\forall h \in H, i\in P}\tag{LP-2.4}\label{4b}\\
&&{\sum_{x\in P^h}x_{ij}}&{\geq \beta_h \sum_{x\in P}x_{ij}}&{\forall h \in H, i\in P}\tag{LP-2.5}\label{5b}\\
&&0\leq x_{ij},& y_i \leq1 &\forall i,j\in P\tag{LP-2.6}\label{7b}
\end{align*}

Note that the LP is in large parts identical to the one in section \ref{section:kcenter}. The only difference is that both the $k$-median and $k$-means objective function are linear, so instead of using \Cref{lp_line6} we now add it as proper objective function to the LP. Which objective function (LP-Med or LP-Means) is used depends on if we are interested in the $k$-median or the $k$-means problem. Analogously to \Cref{lem:LP-contains-all-feasible-clusterings} and \Cref{lem:Integral-LP-solutions-are-clusterings} it can be shown how to obtain clustering solutions from the LP and the other way round.

In the following we denote by $(x, y)$ an optimal solution to the above relaxed LP. Since the algorithm largely works the same for both objectives, we often do not specify which one is used except for cases where it is relevant. Note that the cost of $(x, y)$ is a lower bound for the cost of the optimal doubly fair solution. This holds for both the $k$-median and the $k$-means cost function.

\subsection{Rerouting} 

We use the center set $\CDS$ from \Cref{sec:DSmedmeans} and reroute the point assignments of $(x^{LP},y^{LP})$ in a way that the following hold:
\begin{itemize}
    \item the resulting solution ($x^\prime, y^\prime$) is a feasible solution to the LP
    \item for all $i\in \CDS$ it holds $\sum_{j\in P} x_{ij} \geq 1$ 
    \item we can bound the cost of the resulting solution by the following:
        \begin{itemize}
            \item $\cost(x^\prime, y^\prime) \leq 3\cdot\cost(x, y)+\cost(x^{DS}, y^{DS})$ in the $k$-median case
            \item $\cost^2(x^\prime, y^\prime) \leq (1+p^2+(1+\frac{1}{p^2})(2+q^2))\cost^2(x, y) +(1+\frac{1}{p^2})(1+\frac{1}{q^2})\cost^2(x^{DS}, {y^{DS}})$ for $p, q >0$ in the $k$-means case
        \end{itemize}
    \item all $y'_i$ are integral
\end{itemize}

\noindent We first focus on the second requirement. For $i \in \CDS$ we look at the set $N(i):=\{p\in P\mid x_{p i}>0\}$ which is the set of points that get assigned positive mass from $i$ in $(x, y)$. The main idea is to redirect the mass that is assigned to a point $p\in N(i)$ such that $i$ gets assigned a share of this mass that is equivalent to $x_{pi}$. After we have done this for all $i, p$, we reroute the remaining mass that arrives at $p$ by sending it to the nearest neighbor of $p$ in $\CDS$. For this we use the same mapping function $\theta:\{i \in P\mid y_i>0\}\to \CDS$ as in \Cref{sec:rerouting}.

\begin{figure}
        \centering
\tikzset{
	every picture/.style={line width=1pt},
	point/.style args={#1}{circle, draw=#1, fill=#1, minimum size=4pt, inner sep=0pt},
	center/.style args={#1}{diamond, draw=#1, fill=#1, line width= .5, minimum size=6.5pt, inner sep=0pt},
	outlier/.style args={#1}{circle, draw=#1, minimum size=4pt, inner sep=0pt},
    assign/.style={->,shorten >=.5pt, shorten <= .5pt},
    reassign/.style={dashed, ->,shorten >=.5pt, shorten <= .5pt, blue},
    rerouting/.pic={
        \node[point=black, label=above:\(p\)] (p) at (0,0) {};
        \node[center=black, label=below:\(i\)] (i1) at ($(p)+(-2,-2)$) {};
        \node[point=black, label=right:\(j\)] (i2) at ($(p)+(2,-2)$) {};
        \node[center=black, label= right:\(\theta(p)\)] (j) at ($(p)+(0,2)$) {};
    },
}

\begin{tikzpicture}
    \pic{rerouting};
    
    \draw[assign] (i1) to node[right] {\(x_{pi}\)}  (p);
    \draw[assign] (i2) to node[left] {\(x_{pj}\)}  (p);

    \draw[reassign] (i2) to node[below] {\(\frac{x_{pi}}{x_{pi}+x_{pj}}x_{pi}\)} (i1);
    \draw[reassign,in=110,out=200,, loop , min distance= 40pt] (i1) to node[left] {\(\frac{x_{pi}}{x_{pi}+x_{pj}}x_{pj}\)} (i1); 
    \draw[reassign, color = red] (i2) to node[right] {\((1-\frac{x_{pi}}{x_{pi}+x_{pj}})\ x_{pj}\)} (j);
    \draw[reassign, color = red] (i1) to node[left] {\((1-\frac{x_{pi}}{x_{pi}+x_{pj}}) x_{pi}\)} (j); 
\end{tikzpicture}

        \caption{Rerouting of the mass arriving on $p$. The solution $(x,y)$ is depicted by the black arrows. The \textcolor{blue}{blue} dashed arrows show the rerouting back to $i\in \CDS$ from $p\in N(i)$. The \textcolor{red}{red} dashed arrows show the nearest neighbor rerouting of the remaining mass. }
        \label{fig:rrmed}
\end{figure}

We denote by $r^{i}_{p}:=\frac{x_{p i}}{\sum_{l\in P}x_{p l}}$ the ratio of $x_{pi}$ to the total mass of assignments on $p$. Note that $r^{i}_{p} = 0$ if $x_{pi}= 0$.
Using this notation we define $(x', y')$ the following way:

\[y^\prime_i = \begin{cases}1&\text{if }i\in \CDS\\0&\text{otherwise}\end{cases}%\] 
\text{ and } 
x^\prime_{ij} = \begin{cases}\sum_{p\in N(i)}r^{i}_{p}x_{p j}+ \sum_{p\in \theta^{-1}(i)}(1-\sum_{l^\prime\in \CDS}r^{l'}_{p}) x_{p j}&\text{if }i\in \CDS\\0&\text{otherwise}\end{cases}\]
 An example for this rerouting can be seen in \Cref{fig:rrmed}.
The main difference of this rerouting step to the rerouting in \Cref{sec:rerouting} is that a point $i\in \CDS$ now does not get all the mass that is assigned to a point $p\in N(i)$ (even if he is the only center sending mass there), but only a fraction of it that is equivalent to $x_{pi}$.  This property will later be important when we are bounding the cost of this solution.  

The following lemmas show that $(x', y')$ do indeed have the desired properties. We defer the corresponding proofs to \cref{appendix:for_sec_4}. Note that the solutions $(x', y')$ depend on the objective function ($k$-median or $k$-means), since it is used in the LP in \Cref{sec:GFmedmeans} and the respective opaque-box algorithm in \Cref{sec:DSmedmeans}.
\begin{restatable}{lemma}{lemmaMedMeansFeasability}\label{lemma:medmeans-feasability}
	$(x^\prime, y^\prime)$ is a feasible solution to the LP in \Cref{sec:GFmedmeans}. It holds $\sum_{j\in P}x'_{ij}\geq 1$ for all $i \in \CDS$. All $y'_i$ in this solution are integral. 
\end{restatable}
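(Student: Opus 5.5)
The proof is a direct verification of the LP constraints for $(x',y')$. The central idea is that the rerouting never mixes mass from different points $p$ in a color-dependent way. Every piece of mass that arrives at some $i\in\CDS$ is a scalar multiple of the full incoming vector $(x_{pj})_{j\in P}$ at some $p$, and these scalars do not depend on $j$.

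I would first record two facts. For every point $p$ with positive incoming mass, the coefficients sending $p$'s mass to centers are $r^i_p$ for $i\in\CDS$ with $p\in N(i)$, together with $1-\sum_{l'\in\CDS}r^{l'}_p$ sent to $\theta(p)$. These coefficients are nonnegative and sum to $1$. The remainder is nonnegative because $\sum_{l'\in\CDS}r^{l'}_p\le\sum_{l\in P}x_{pl}/\sum_{l\in P}x_{pl}=1$. Here I use that $\theta$ is defined on all points with $y_p>0$, so the remainder always has a destination.

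\textbf{Constraint (LP-2.1).} Summing over $i$ gives $\sum_i x'_{ij}=\sum_p x_{pj}\cdot(\text{sum of coefficients of }p)=\sum_p x_{pj}=1$.

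\textbf{Fairness constraints (LP-2.4), (LP-2.5).} For a fixed $i\in\CDS$, write $x'_{ij}=\sum_p \mu_{ip}x_{pj}$, where $\mu_{ip}\ge0$ is independent of $j$. Then $\sum_{j\in P_h}x'_{ij}=\sum_p\mu_{ip}\sum_{j\in P_h}x_{pj}$, and similarly for the total $\sum_{j\in P}x'_{ij}$. The bounds $\beta_h\sum_j x_{pj}\le\sum_{j\in P_h}x_{pj}\le\alpha_h\sum_j x_{pj}$ hold for each $p$, and they survive nonnegative combinations. For $i\notin\CDS$ both sides are $0$.

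\textbf{Remaining constraints.} $y'$ is integral by definition, with $\sum_i y'_i=|\CDS|\le k$. For (LP-2.2) and (LP-2.7), note $x'_{ij}\le\sum_i x'_{ij}=1=y'_i$ for $i\in\CDS$, and $x'_{ij}=0$ otherwise.

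\textbf{Mass lower bound.} For $i\in\CDS$, keep only the first sum. This gives $\sum_j x'_{ij}\ge\sum_{p\in N(i)} r^i_p\sum_j x_{pj}=\sum_{p\in N(i)}x_{pi}=1$ by (LP-2.1) applied to $j=i$.

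\textbf{Main obstacle.} The main care point is bookkeeping. I must confirm that the coefficient decomposition is well defined when $i\in\CDS$ is itself a point with incoming mass: then $\theta(i)=i$, and $i$ may also lie in $N(i)$. I also need that points $p$ with zero incoming mass contribute nothing. Both are handled by the per-$p$ coefficient view, which I would make explicit first.
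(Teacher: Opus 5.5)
Your proposal is correct and follows essentially the same route as the paper: a constraint-by-constraint check in which (LP-2.1) comes from regrouping the sums per point $p$, (LP-2.4)/(LP-2.5) from the fact that each $x'_{ij}$ is a $j$-independent nonnegative combination of the vectors $(x_{pj})_{j}$, and the mass bound from dropping the $\theta$-part and using $r^i_p\sum_j x_{pj}=x_{pi}$. Your explicit check that $1-\sum_{l'\in\CDS}r^{l'}_p\ge 0$ is a welcome addition, since the paper only uses it implicitly; note also that the box constraint is labelled (LP-2.6), not (LP-2.7).
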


\begin{restatable}[Cost $k$-median]{lemma}{lemmaKmedCost}\label{lemma:med cost}
        For the $k$-median cost of $(x', y')$ the following holds:
    \[\text{cost}(x^\prime, y^\prime) \leq 3\cdot\text{cost}(x, y)+\text{cost}(x^{DS}, y^{DS}).\]
\end{restatable}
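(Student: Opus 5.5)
The plan is to bound $\cost(x',y')=\sum_{i\in\CDS}\sum_{j\in P}x'_{ij}d(i,j)$ term by term, following the two summands in the definition of $x'_{ij}$. For each point $p$ with positive incoming mass, $p$ forwards its mass $x_{pj}$ in two ways: a fraction $r^i_p$ goes to each $i\in\CDS$ with $p\in N(i)$, and the remaining fraction $1-\sum_{l'\in\CDS}r^{l'}_p$ goes to $\theta(p)$. So the cost splits into a ``proportional part'' and a ``nearest-neighbor part''. Throughout, the triangle inequality through $p$ is applied to each rerouted unit of mass.

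\textbf{Proportional part.} For $p\in N(i)$, bound $d(i,j)\le d(i,p)+d(p,j)$. This gives two sums.
\begin{itemize}
\item The first is $\sum_{i}\sum_{p\in N(i)}\sum_j r^i_p x_{pj}d(i,p)$. Since $\sum_j x_{pj}$ is exactly the denominator of $r^i_p$, summing over $j$ gives $r^i_p\sum_j x_{pj}=x_{pi}$. The sum therefore collapses to $\sum_{i\in\CDS}\sum_p x_{pi}d(p,i)$. This is the LP cost of the points $i\in\CDS$ themselves, so it is at most $\cost(x,y)$.
\item The second is $\sum_{p,j}\bigl(\sum_{i\in\CDS}r^i_p\bigr)x_{pj}d(p,j)$. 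I will keep it in this form and combine it with the next part.
\end{itemize}

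\textbf{Nearest-neighbor part.} For the mass $(1-\sum_{l'}r^{l'}_p)x_{pj}$ sent to $\theta(p)$, write $d(\theta(p),j)\le d(p,j)+d(p,\theta(p))$. The key step is to charge $d(p,\theta(p))$ to the DS solution. Let $\varphi(j)\in\CDS$ be the center to which $j$ is assigned in $(x^{DS},y^{DS})$. Since $\theta(p)$ is a nearest point of $\CDS$ to $p$ (and $\theta(p)=p$ if $p\in\CDS$), we get
\[
d(p,\theta(p))\le d(p,\varphi(j))\le d(p,j)+d(j,\varphi(j)).
\]
Hence this mass costs at most $(1-\sum_{l'}r^{l'}_p)\,x_{pj}\bigl(2d(p,j)+d(j,\varphi(j))\bigr)$.

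\textbf{Collecting terms.} The coefficient of $x_{pj}d(p,j)$ is $\sum_i r^i_p+2(1-\sum_i r^i_p)\le 2$, using $\sum_{i\in\CDS} r^i_p\le 1$, which holds because the $r^i_p$ over all $i\in P$ sum to $1$. These terms therefore contribute at most $2\cost(x,y)$. The DS terms are bounded by $\sum_j d(j,\varphi(j))\sum_p x_{pj}$, and $\sum_p x_{pj}=1$ by (LP-2.1), so they contribute at most $\cost(x^{DS},y^{DS})$. Together with the first proportional sum, the total is $3\cost(x,y)+\cost(x^{DS},y^{DS})$.

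\textbf{Main obstacle.} The delicate step is the choice of the comparison point for $d(p,\theta(p))$. It must be $\varphi(j)$, the DS center of the \emph{sending} point $j$, so that the charge is paid by $j$'s DS cost weighted by $x_{pj}$, and these weights sum to one over $p$. I also need $\theta$ to be defined on every $p$ with positive incoming mass; this is the case here, since $\theta$ is defined on $\{i\in P\mid y_i>0\}$.
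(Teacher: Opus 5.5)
Your proof is correct and follows the paper's argument: you split $x'$ into the proportional part and the nearest-neighbor part, apply the triangle inequality through $p$, collapse the $d(i,p)$ terms to $\sum_{i\in\CDS}\sum_{p} x_{pi}d(p,i)\le \cost(x,y)$, and charge $d(p,\theta(p))\le d(p,j)+d(j,\varphi(j))$ to the DS solution using $\sum_p x_{pj}=1$. The only cosmetic difference is that the paper drops the term $-\sum_{l'} r^{l'}_p x_{pj} d(p,\theta(p))$ and lets $-\sum_{l'} r^{l'}_p x_{pj} d(p,j)$ cancel exactly against the proportional part, whereas you keep the factor $(1-\sum_{l'} r^{l'}_p)$ on both terms and bound the resulting coefficient $2-\sum_{l'} r^{l'}_p$ by $2$.
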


\begin{restatable}[Cost $k$-means]{lemma}{lemmaKmeansCost}\label{lemma:means cost}For the $k$-means cost of $(x', y')$ the following holds for $p, q >0$:
\[\cost^2(x^\prime, y^\prime) \leq (1+p^2+(1+\frac{1}{p^2})(2+q^2))\cost^2(x, y) +(1+\frac{1}{p^2})(1+\frac{1}{q^2})\cost^2(x^{DS}, {y^{DS}}).
\] 
\end{restatable}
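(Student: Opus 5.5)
The plan is to bound $\cost^2(x',y')$ by decomposing $x'$ into pieces of mass indexed by triples $(j,p,i)$: mass that $j$ originally sent to $p$ in $(x,y)$ and that is now sent to $i\in\CDS$. For each such piece I apply the triangle inequality through $p$, namely $d(j,i)\le d(j,p)+d(p,i)$. Then I use the relaxed inequality $(a+b)^2\le(1+p^2)a^2+(1+\frac{1}{p^2})b^2$. Here $p>0$ is the free parameter of the statement, and I will write $p'$ for the intermediate point to avoid a clash.

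\textbf{Bookkeeping of weights.} For fixed $j$ and $p'$, the mass $x_{p'j}$ is split into the pieces $r^i_{p'}x_{p'j}$ for $i\in\CDS$ with $p'\in N(i)$, plus $(1-\sum_{l'\in\CDS}r^{l'}_{p'})x_{p'j}$ sent to $\theta(p')$. These pieces sum to exactly $x_{p'j}$, since $\sum_{l'\in\CDS} r^{l'}_{p'}\le 1$. Summing the term $(1+p^2)d(j,p')^2$ over all pieces therefore gives exactly $(1+p^2)\cost^2(x,y)$. It remains to bound the $(1+\frac{1}{p^2})$-weighted terms $d(p',i)^2$ for each of the two kinds of pieces.

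\textbf{Pieces of the first kind} ($i\in\CDS$, $p'\in N(i)$). Summing over $j$ gives
\[
\sum_j r^i_{p'}x_{p'j}\,d(p',i)^2 = x_{p'i}\,d(p',i)^2,
\]
because $r^i_{p'}\sum_j x_{p'j}=x_{p'i}$ by definition of $r^i_{p'}$. Summing over $i\in\CDS$ and $p'$ is at most $\cost^2(x,y)$, since these are terms of the LP cost where $i$ plays the role of the point and $p'$ the role of the center.

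\textbf{Pieces of the second kind} (sent to $\theta(p')$). I will charge $d(p',\theta(p'))^2$ pointwise against each $j$ with $x_{p'j}>0$. Let $\sigma(j)$ denote the DS center of $j$ under $x^{DS}$. Since $\theta(p')$ is a nearest $\CDS$-center of $p'$,
\[
d(p',\theta(p'))\le d(p',\sigma(j))\le d(p',j)+d(j,\sigma(j)).
\]
Applying the inequality with parameter $q$ gives
\[
d(p',\theta(p'))^2\le (1+q^2)\,d(p',j)^2+\Bigl(1+\tfrac{1}{q^2}\Bigr)d(j,\sigma(j))^2 .
\]
The piece's weight is at most $x_{p'j}$. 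Summing over $p'$ and $j$, and using $\sum_{p'}x_{p'j}=1$ for the DS term, bounds this part by
\[
(1+q^2)\cost^2(x,y)+\Bigl(1+\tfrac{1}{q^2}\Bigr)\cost^2(x^{DS},y^{DS}).
\]

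\textbf{Combining.} The total is at most
\[
(1+p^2)\cost^2(x,y)+\Bigl(1+\tfrac{1}{p^2}\Bigr)\Bigl[\cost^2(x,y)+(1+q^2)\cost^2(x,y)+\Bigl(1+\tfrac{1}{q^2}\Bigr)\cost^2(x^{DS},y^{DS})\Bigr],
\]
which is exactly the claimed bound. The main obstacle is the bookkeeping in the first-kind term: I need the collapse $\sum_j r^i_{p'}x_{p'j}=x_{p'i}$, so that rerouting back to $i$ is paid for by $i$'s own LP assignment rather than by the mass arriving at $p'$. I also have to make sure the second-kind weights are dominated by $x_{p'j}$ and not double counted with the first kind.
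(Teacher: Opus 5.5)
Your proposal is correct and follows the paper's argument. The paper likewise splits $x'$ into $x'^N$ and $x'^\theta$ and routes each distance through the intermediate point using $(a+b)^2\le(1+p^2)a^2+(1+\frac{1}{p^2})b^2$ (with parameters $p_1=p_2=p$). For the nearest-center term it uses $q$ and $d(p',\theta(p'))\le d(p',\sigma(j))$, and its subtracted term $\sum r^{l'}_{p'}x_{p'j}(1+p^2)d(p',j)^2$ cancels exactly as in your observation that the pieces of $x_{p'j}$ sum to $x_{p'j}$. One small correction: that exact sum holds because $r^{l'}_{p'}=0$ whenever $p'\notin N(l')$, whereas $\sum_{l'}r^{l'}_{p'}\le 1$ is what makes the second-kind weights nonnegative.
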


\subsection{Final fair assignment}
In this section, we show how to round the fractional assignments $x'$ to get an integral solution with small additive violation of the group fairness constraint. We show that the resulting solution \((x'',y')\) has the same cost as $(x', y')$.

For this, we use the same min cost flow construction as Bercea et al. use in \cite{bercea2019cost} to round their essentially fair clustering solutions.
Similarly to the construction used in \Cref{sec:finalAssign} the vertex set $V$ is given by $V = \{s,t\} \cup P\cup \{(i, h)\mid i\in \CDS, h\in H\} \cup \CDS$ and the set $E$ of edges is given by $E= A_1\cup A_2 \cup A_3\cup A_4$, where $A_1= \{(s, j)\mid j \in P\}$, $A_2= \{(j, (i,h) \mid j \in P, i \in \CDS, h \in H, x'_{ij} > 0 \}$, $A_3= \{((i, h), i)\mid i\in \CDS, h\in H\}$ and $A_4= \{(i, t)\mid i \in \CDS\}$.
All edges have unit capacity. $A_2$ is the only edge set that has nonzero cost. An edge $(p, (i, h))\in A_2$ has cost $d(p, i)$ in the $k$-median case and cost $d(p, i)^2$ in the $k$-means case. The node balances are defined as follows:
\begin{itemize}
\item $s$ has balance $\abs{P}$
\item All $p\in P$ have balance 0
\item All $(i, h) \in \CDS \times H$ have balance $-(\floor{\sum_{j\in P_h}x^\prime_{ij}})$
\item All $i \in \CDS$ have balance $-(\floor{\sum_{j\in P}x^\prime_{ij}}-\sum_{h\in H}\floor{\sum_{j\in P_h}x^\prime_{ij}})$
\item $t$ has balance $-(\abs{P}-\sum_{i\in \CDS}\floor{\sum_{j\in P}x^\prime_{ij}})$
\end{itemize}

The structure of this min cost flow can be seen in \Cref{fig:maxflow} since it is identical to the structure of the max flow instance used in \Cref{sec:finalAssign}.
Since all capacities and balances are integral, we can find an integral solution to the flow instance that has minimal cost.
We call the assignment we extract from this MinCostflow-solution $x^{\prime\prime}$. We show that it satisfies group fairness with an additive violation of 2:

\begin{restatable}[Group fairness -- \(k\)-median, \(k\)-means]{lemma}{lemmaAdditiveViolkcmedmeans}
\label{lem:finalAssignmedmeans}
    For each center \(i\) and each color \(h\) it holds that \[ l_h \sum\limits_{j \in P}x''_{ij} - 2 \leq \sum\limits_{j \in P_h}x''_{ij} \leq u_h \sum\limits_{j \in P}x''_{ij} + 2.\]
\end{restatable}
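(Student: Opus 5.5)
The argument rests on two facts about the integral min cost flow. First, in the integral flow, each center-color node \((i,h)\) receives flow between \(\floor{\sum_{j\in P_h}x'_{ij}}\) and \(\ceil{\sum_{j\in P_h}x'_{ij}}\). Second, each center node \(i\) receives total flow between \(\floor{\sum_{j\in P}x'_{ij}}\) and \(\ceil{\sum_{j\in P}x'_{ij}}\).

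I would derive both bounds from the balances and the unit capacities as follows. The flow out of \((i,h)\) into \(i\) uses a single edge of capacity \(1\). By conservation, the inflow at \((i,h)\) is therefore exactly \(\floor{\sum_{j\in P_h}x'_{ij}}\) plus a value in \(\{0,1\}\). Similarly, the flow from \(i\) to \(t\) is at most \(1\). So the inflow at \(i\) equals \(\floor{\sum_{j\in P}x'_{ij}}\) plus a value in \(\{0,1\}\), once we subtract the \(\sum_h\floor{\cdot}\) part already absorbed at the \((i,h)\) nodes.

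Existence of this integral flow follows from integrality of min cost flow. A feasible fractional flow is induced by \(x'\): send \(x'_{ij}\) on \((j,(i,h))\) and route the fractional excesses onward. Lemma \ref{lemma:medmeans-feasability} guarantees that \(x'\) is a feasible LP solution. For the degenerate case where the floor already equals the value, I would note that the capacity-\(1\) edges then carry \(0\).

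With the sandwich in hand, I would write \(a_h=\sum_{j\in P_h}x'_{ij}\) and \(A=\sum_{j\in P}x'_{ij}\), and let \(a''_h,A''\) be the corresponding integral quantities. Then \(|a''_h-a_h|<1\) and \(|A''-A|<1\). Feasibility of \(x'\) for constraint \eqref{4b} gives \(a_h\le u_hA\). Hence
\[
a''_h < a_h+1 \le u_hA+1 < u_h(A''+1)+1\le u_hA''+2,
\]
using \(u_h\le1\). Symmetrically, \eqref{5b} gives \(a_h\ge \ell_hA\), and so
\[
a''_h > a_h-1\ge \ell_hA-1 > \ell_h(A''-1)-1\ge \ell_hA''-2 .
\]
Finally, \(x''_{ij}\) is read off the flow on \(A_2\) edges, so \(\sum_{j\in P_h}x''_{ij}=a''_h\) and \(\sum_{j\in P}x''_{ij}=A''\). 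This gives the claimed inequalities.

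I expect the main obstacle to be the first step: carefully justifying that the balance and capacity design forces the per-color and per-center flows into the floor/ceiling windows. The subtle point is that the center node's balance is the difference \(\floor{A}-\sum_h\floor{a_h}\). Here I would argue directly from flow conservation at \(i\), summing over \(h\). The remaining algebra is routine and mirrors Lemma \ref{lem:finalAssign}.
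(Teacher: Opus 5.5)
Your proposal is correct and is essentially the paper's proof: the paper also reads the windows $\floor{a_h}\le a''_h$ and $\floor{A}\le A''$ (with an upper bound of at most one more) off the balances and the unit capacities on $A_3$ and $A_4$, and then runs exactly your two chains of inequalities. One small correction: your degenerate-case remark is false, because when $a_h$ or $A$ is integral nothing in the balances stops the corresponding unit edge from carrying flow $1$, so $|a''_h-a_h|<1$ and $|A''-A|<1$ can fail with equality; however, your chains only need $a''_h\le\floor{a_h}+1\le a_h+1$ and $A''\le\floor{A}+1\le A+1$ together with the floor lower bounds, so they go through verbatim with non-strict inequalities (the paper's ceiling bound has the same harmless imprecision).
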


The final solution is then obtained via the pair $( x^{\prime\prime}, y^{\prime})$. 
We defer the proof that this results in \cref{thm:medmeans} to \cref{appendix:for_sec_4}.

\thmmedmeans*

\section{Conclusion}
We study doubly constrained fair 
\(k\)-clustering problem, requiring both group fairness within clusters and diverse center selection among representatives.
Via a modular and extendable approach that uses an LP-based rerouting and rounding routine, we obtain a 4-approximation for \(k\)-center and the first constant-factor approximations for \(k\)-median and \(k\)-means in general metrics.
Natural next steps include proving nontrivial lower bounds and eliminating the additive violation. Further, developing non-sequential algorithms that enforce both fairness constraints directly might enable better approximation guarantees.

\newpage
\bibliography{references}
\newpage
\markboth{Constant-Factor Approximations for Doubly Constrained Fair \texorpdfstring{\(k\)}{k}-Clustering}{APPENDIX}
\appendix
\section{Omitted proofs from \texorpdfstring{\Cref{section:kcenter}}{Section 2}}\label{appendix:for_sec_3}

\lemmaLPcontainsallfeasibleclusterings*
\begin{proof}
    (\ref{lp_line7}) holds by definition.
    Let \(j\in P\). As the assignment \(\varphi\colon P\to \C\) is a function, there exists exactly one \(i'\in \C\) such that \(\varphi(j)=i\). By definition of \(x\), this \(i'\) is the only element in \(P\) such that \(x_{i'j}>0\). Hence, \(\sum_{i\in P}x_{ij} = x_{i'j} = 1\), which implies (\ref{lp_line1}).

    To prove (\ref{lp_line2}), we observe that \(x_{ij} = 1\) if and only if \(i=\varphi(j)\), which is only possible if \(i\in \C\) by definition of \(\varphi\). For \(i\in \C\), it is \(y_i=1\) by definition. For \(x_{ij}=0\) the constraint is trivially fulfilled. 
    
    It is \(y_i = 1\) for at most \(k\) points \(i\) as \(|\C|\le k\). This implies (\ref{lp_line3}).

    Let \(i\in P\) and \(h\in H\). Then, \(\sum_{j\in P}x_{ij} = \sum_{j\in P}\mathbbm{1}[i=\varphi(j)] = |\varphi^{-1}(i)|\) and
    \(\sum_{j\in P_h}x_{ij} = \sum_{j\in P_h}\mathbbm{1}[i=\varphi(j)] = |\varphi^{-1}(i)\cap P_h|\).
    By GF, it follows that 
    \(|\varphi^{-1}(i)\cap P_h| \ge \ell_h|\varphi^{-1}(i)|\) and \(|\varphi^{-1}(i)\cap P_h| \le u_h|\varphi^{-1}(i)|\). This yields (\ref{lp_line4}) and (\ref{lp_line5}).

    (\ref{lp_line6}) holds as otherwise, there would be \(i,j\in P\) with \(d(i,j)>\lambda^{\C}\) and \(i=\varphi(j)\). This would contradict the objective value of \((\C,\varphi)\).
\end{proof}

\lemmaIntegralLPsolutionsareclustering*
\begin{proof}
    The LP-constraint (\ref{lp_line1}) together with the assumption that \(x\) is integral implies that there exists exactly one \(i\in P\) such that \(x_{ij}=1\) for all \(j\in P\). This means that for every \(j\in P\), there exists exactly one \(i\in P\) such that \(\varphi(j)=i\) and therefore the assignment \(\varphi\) is well-defined. By (\ref{lp_line3}) and integrality of \(y\), it follows that there are at most \(k\) points \(i\in P\) such that \(y_i=1\) and therefore \(|\C|\le k\). 
    % group fairness
    For \(h\in H\) and \(i\in P\), it is \(\sum_{j\in P}x_{ij} = |\varphi^{-1}(i)|\) and \(\sum_{j\in P_h}x_{ij} = |\varphi^{-1}\cap P_h|\). Hence, (\ref{lp_line4}) and (\ref{lp_line5}) immediately imply group fairness.
    The objective value follows directly from (\ref{lp_line6}).
\end{proof}

\lemmaFirstRerouting*
\begin{proof}\hfill
    \begin{enumerate}
        \item 
        \begin{itemize}
            \item For (\ref{lp_line1}), consider \(j\in P\).
            \begin{align*}
                \sum_{i\in \CDS}x'_{ij} &= \sum_{i\in \CDS}\left( \sum_{p\in N(i)} \frac{x_{pi}}{\sum_{\ell\in \CDS\colon p\in N(\ell)}x_{p\ell}}x_{pj} + \sum_{p\in \theta^{-1}(i)}x_{pj} \right) \\
                &= \sum_{p\in N(\CDS)}\sum_{i\in \CDS\colon p\in N(i)}\frac{x_{pi}}{\sum_{\ell\in \CDS\colon p\in N(\ell)}x_{p\ell}}x_{pj} + \sum_{i\in \CDS}\sum_{p\in \theta^{-1}(i)}x_{pj}\\
                &= \sum_{p\in N(\CDS)}x_{pj} + \sum_{p\in P\setminus N(\CDS)}x_{pj} = \sum_{p\in P}x_{pj} = 1,
            \end{align*}
            where the last equality holds because \((x,y)\) is a feasible solution for the LP and the second-to-last equality holds because \(\theta\) is a function defined on \(P\setminus N(\CDS)\).
            \item By construction, \(0\le y'_i \le 1\) immediately holds for all \(i\in P\). Since the values of \(x\) are non-negative, \(x'_{ij}\ge0\) for all \(i,j\in P\) as well. By (\ref{lp_line1}), it therefore follows that \(x'_{ij}\le 1\) for all \(i,j\in P\). This implies (\ref{lp_line7}).
            \item (\ref{lp_line2}) holds since either \(y'_i=1\) and \(x'_{ij}\le 1\) by (\ref{lp_line7}), or \(y'_i=0\) and \(x'_{ij}=0\) for all \(i,j\in P\).
            \item (\ref{lp_line3}) follows since \(|\CDS|\le k\) and \(y_i=1\) if and only if \(i\in \CDS\).
            \item For (\ref{lp_line4}), fix \(i\in \CDS\) and \(h\in H\). Then,  
            \begin{align*}
                \sum_{j\in P_h} x'_{ij} &= \sum_{j\in P_h}\left( \sum_{p\in N(i)} \frac{x_{pi}}{\sum_{\ell\in \CDS\colon p\in N(\ell)}x_{p\ell}}x_{pj} + \sum_{p\in \theta^{-1}(i)}x_{pj} \right)\\
                &= \sum_{p\in N(i)}\left(\frac{x_{pi}}{\sum_{\ell\in \CDS\colon p\in N(\ell)}x_{p\ell}}\sum_{j\in P_h}x_{pj}\right) + \sum_{p\in \theta^{-1}(i)}\sum_{j\in P_h}x_{pj}\\
                &\le u_h \sum_{p\in N(i)}\left(\frac{x_{pi}}{\sum_{\ell\in \CDS\colon p\in N(\ell)}x_{p\ell}}\sum_{j\in P}x_{pj}\right) + u_h\sum_{p\in \theta^{-1}(i)}\sum_{j\in P}x_{pj}\\
                &= u_h \sum_{j\in P} \left( \sum_{p\in N(i)} \frac{x_{pi}}{\sum_{\ell\in \CDS\colon p\in N(\ell)}x_{p\ell}}x_{pj} + \sum_{p\in \theta^{-1}(i)}x_{pj} \right)\\
                &= u_h\sum_{j\in P}x'_{ij}
            \end{align*}
            (\ref{lp_line5}) follows analogously.
            \item For (\ref{lp_line6}), consider \(i,j\in P\) such that \(x'_{ij}>0\). By definition of \(x'\), it must be that \(i\in \CDS\). Also by definition, one of the summands \(\sum_{p\in N(i)} \frac{x_{pi}}{\sum_{\ell\in \CDS\colon p\in N(\ell)}x_{p\ell}}x_{pj}\) or \(\sum_{p\in \theta^{-1}(i)}x_{pj}\) must be \(>0\).
            If \(\sum_{p\in N(i)} \frac{x_{pi}}{\sum_{\ell\in \CDS\colon p\in N(\ell)}x_{p\ell}}x_{pj} > 0\), then there exists \(p\in N(i)\) such that \(x_{pi}>0\) and \(x_{pj}>0\). As \((x,y)\) is a feasible solution for \(\LP(\lambda)\), (\ref{lp_line6}) implies \(d(p,i)\le \lambda\) and \(d(p,j)\le \lambda\). Hence, by triangle inequality, \(d(i,j)\le 2\lambda\).
            
            If \(\sum_{p\in \theta^{-1}(i)}x_{pj}> 0\), then there exists \(p\in \theta^{-1}(i)\) such that \(x_{pj}>0\).
            Again, by (\ref{lp_line6}), \(d(p,j)\le \lambda\).
            Further, as \(\theta(p)=i\), it follows that \(i\in \argmin_{i'\in \CDS}d(p,i')\).
            By definition of \(\lambda\), it is \(d(p,i) = \min_{i'\in\CDS}d(p,i') \le \dsfactor\cdot\lambda\).
            Plugging everything together yields 
            \(d(i,j)\le d(i,p) + d(p,j) \le (\dsfactorcenter+1)\lambda\).
        \end{itemize}
        \item Let \(i\in \CDS\). Then,
        \begin{align*}
            \sum_{j\in P}x'_{ij} &= \sum_{j\in P}\left( \sum_{p\in N(i)} \frac{x_{pi}}{\sum_{\ell\in \CDS\colon p\in N(\ell)}x_{p\ell}}x_{pj} + \sum_{p\in \theta^{-1}(i)}x_{pj} \right)\\
            &\ge \sum_{j\in P} \sum_{p\in N(i)} \frac{x_{pi}}{\sum_{\ell\in \CDS\colon p\in N(\ell)}x_{p\ell}}x_{pj}\\
            &= \sum_{p\in N(i)} \frac{x_{pi}}{\sum_{\ell\in \CDS\colon p\in N(\ell)}x_{p\ell}}\sum_{j\in P}x_{pj},
        \end{align*}
        where the first inequality holds because \((x,y)\) is a feasible solution for \(\LP(\lambda)\) and therefore satisfies \(x_{ij}\ge 0\) for all \(i,j\in P\) (\ref{lp_line7}). Further, we can argue that \(\sum_{j\in P}x_{pj} \ge \sum_{j\in \CDS\colon p\in N(j)}x_{pj}\) for all \(p\in N(i)\) as all summands are non-negative and we sum over a subset of the summands. Using this bound, the numerator and the denominator cancel out, and we remain with \(\sum_{j\in P}x'_{ij} \ge \sum_{p\in N(i)}x_{pi}\). Because \(x_{pi}=0\) if and only if \(p\not\in N(i)\), we can deduce that \(\sum_{p\in N(i)}x_{pi} = \sum_{p\in P}x_{pi} = 1\) using (\ref{lp_line1}).
        \item For all \(i\in P\), \(y_i\in \{0,1\}\) by definition. 
    \end{enumerate}
\end{proof}

\lemmaAdditiveViolkcenter*
\begin{proof}
    Let \(\text{mass}_h(i, x') = \sum_{j \in P_h}x'_{ij}\) and \(\text{mass}(i,x') = \sum_{j \in P}x'_{ij}\) be the sum of points of color \(h\) that is assigned to center \(i\) and the sum of total points assigned to center \(i\) after the rerouting step respectively. As \((x', y')\) is a feasible solution for \(\LP(\lambda)\) (see \Cref{lem:rerouting}), we know that \(l_h \cdot \text{mass}(i,x') \leq \text{mass}_h(i, x') \leq u_h \cdot \text{mass}(i,x')\).  
    Let \(\text{mass}_h(i,x'') = \sum_{j \in P_h}x''_{ij} \) and \(\text{mass}(i,x'') = \sum{j \in P}x''_{ij} \) be the sum of points (of color \(h\)) that is assigned to center \(i\) in in the final point assignment step (see \Cref{sec:finalAssign}). 
    We know that \(\floor{\text{mass}_h(i, x')} \leq \text{mass}_h(i,x'') \leq \ceil{\text{mass}_h(i, x')}\) because the lower and upper bounds on each edge \(((i,h), i )\in A_3\) are \(\floor{\sum_{j\in P_h}x^\prime_{ij}}\) and \(\ceil{\sum_{j\in P_h}x^\prime_{ij}}\). Due to the lower and upper bounds on each edge \((i,t) \in A_4\), which are \(\floor{\sum_{j\in P}x^\prime_{ij}}\) and \(\ceil{\sum_{j\in P}x^\prime_{ij}}\), we additionally know that \(\floor{\text{mass}(i,x')} \leq \text{mass}(i,x'') \leq \ceil{\text{mass}(i,x')}\). Since \(\ceil{\text{mass}_h(i, x')} < \text{mass}_h(i, x') +1\) and \(\floor{\text{mass}_h(i, x')} > \text{mass}_h(i, x') -1\), it is true that \(\text{mass}_h(i,x'') < \text{mass}_h(i, x')+1 \leq u_h \cdot \text{mass}(i,x') + 1 \leq u_h \cdot (\text{mass}(i,x'') + 1) + 1 = u_h \cdot \text{mass}(i,x'') + u_h + 1 \leq u_h \cdot \text{mass}(i,x'')  + 2\) and \(\text{mass}_h(i,x'') > \text{mass}_h(i, x') - 1 \geq l_h \cdot (\text{mass}(i,x'') - 1) - 1 = l_h \cdot \text{mass}(i,x') - l_h - 1 \geq l_h \text{mass}(i,x'') - 2\). 
\end{proof}

\thmcenter*
\begin{proof}
As it follows from Lemmas \ref{lem:Integral-LP-solutions-are-clusterings}, \ref{lem:rerouting} and \ref{lem:finalAssign} that the returned solution of Algorithm \ref{alg:k-center} has cost of at most $4\cdot\OPT$ and satisfies DS exactly and GF with an additive violation of at most \(2\), all that is left to show is the polynomial running time.

Assume that we are given an algorithm $\mathcal{A}_{center}$ for the center constraint with running time $t_{center}$ and can solve an LP in $t_{LP}$ time.
If using the DS $k$-center algorithm by Jones et al. \cite{jones2020matching} as $\mathcal{A}_{center}$, computing a set of diverse centers takes $O(nk)$ time.
To find a good upper bound $\lambda$ for the clustering cost, we iterate over the set of pairwise distances in $P$ and compute an LP solution in time $O(n^2t_{LP})$.
Solving $LP(\lambda)$ takes $t_{LP}$ time.
Rerouting one point requires iterating over all centers for each neighbor of each center $i \in \CDS$ as well as iterating over points not neighboring a center, taking $O(k^2n)$ time per point.
Computing the final assignment first requires building the flow network by iterating over $P$, $\CDS \times H$, $\CDS$, and $P\times H$, requiring $O(nm + km)$ time.
Computing the max flow of this network can be done in polynomial time, e.\,g., using the algorithm by Orlin \cite{orlin2013maxflow} taking $O(VE)$ time on our network, where $V \in O( n+km+k)$  and $E \in O(n +nmk+ mk+ k)$.
Lastly, computing the final assignment takes $O(n^2)$ time.
In total this results in polynomial running time in $O(t_{center}+ n^2t_{LP} +k^2 m^2 n + k m n^2)$.
\end{proof}

\section{Omitted proofs from \texorpdfstring{\Cref{section:Kmedmeans}}{Section 3}}\label{appendix:for_sec_4}
\lemmaMedMeansFeasability*
\begin{proof}
		We first prove the feasibility:
		\begin{itemize}
		\item To show Equation (\ref{1b}) we consider the following: Let $j\in P$. Then
        \begin{align*}
            \sum_{i\in \CDS}x'_{ij} &=  \sum_{i\in \CDS}\left(\sum_{p\in N(i)}r^{i}_{p}\cdot x_{p j}+ \sum_{p\in \theta^{-1}(i)}(1-\sum_{l^\prime\in \CDS}r^{l'}_{p})\cdot x_{p j}\right)\\
            &=  \sum_{i\in \CDS}\sum_{p\in N(i)}r^{i}_{p}\cdot x_{p j}+ \sum_{i\in \CDS}\sum_{p\in \theta^{-1}(i)}x_{p j}-\sum_{i\in \CDS}\sum_{p\in \theta^{-1}(i)}\sum_{l^\prime\in \CDS}r^{l'}_{p}\cdot x_{p j}\\
            &=  \sum_{i\in \CDS}\sum_{p\in N(i)}r^{i}_{p}\cdot x_{p j}+ \sum_{p\in P}x_{p j}-\sum_{p\in P}\sum_{l^\prime\in \CDS}r^{l'}_{p}\cdot x_{p j}\\
            &=  \sum_{i\in \CDS}\sum_{p\in N(i)}r^{i}_{p}\cdot x_{p j}+ \sum_{p\in P}x_{p j}-\sum_{l^\prime\in \CDS}\sum_{p\in N(l')}r^{l'}_{p}\cdot x_{p j}\\
            &= \sum_{p\in P}x_{p j}=1
        \end{align*}

		\item Equation (\ref{2b}) holds automatically since either $y'_i=1$ and $x'_{ij} \leq 1$ by 
        Equation (\ref{1b}) or $y'_i= 0$ and $x'_{ij}=0$ for all $i, j$
		\item Equation (\ref{3b}) follows since \(|\CDS|\le k\) and \(y_i=1\) if and only if \(i\in \CDS\).
		\item to show Equation (\ref{4b}) we fix $i\in \CDS, h\in H$. Then it holds:
		\begin{align*}
		&\sum_{j\in P^h}x'_{ij}\\
        &=\sum_{j\in  P^h}\left(\sum_{p\in N(i)}r^{i}_{p}\cdot x_{p j}+ \sum_{p\in \theta^{-1}(i)}(1-\sum_{l^\prime\in \CDS:p\in N(l')}r^{l'}_{p})\cdot x_{p j}\right)\\
        &=\sum_{p\in N(i)}r^{i}_{p}\cdot\sum_{j\in  P^h}x_{p j}+ \sum_{p\in \theta^{-1}(i)}(1-\sum_{l^\prime\in \CDS:p\in N(l')}r^{l'}_{p})\cdot\sum_{j\in  P^h} x_{p j}\\
        &\leq\alpha_h\sum_{p\in N(i)}r^{i}_{p}\cdot\sum_{j\in  P}x_{p j}+ \alpha_h\sum_{p\in \theta^{-1}(i)}(1-\sum_{l^\prime\in \CDS:p\in N(l')}r^{l'}_{p})\cdot\sum_{j\in  P} x_{p j}\\
        &=\alpha_h\sum_{j\in  P}\left(\sum_{p\in N(i)}r^{i}_{p}\cdot x_{p j}+ \sum_{p\in \theta^{-1}(i)}(1-\sum_{l^\prime\in \CDS:p\in N(l')}r^{l'}_{p})\cdot x_{p j}\right)\\
		&= \alpha_h\sum_{j\in P}x'_{ij}
		\end{align*}

		\item Equation (\ref{5b}) works analogously to Equation (\ref{4b}).
		\end{itemize}

		We now only have to show $\sum_{j\in P}x'_{ij}\geq 1$ for all $i \in \CDS$. 
		\begin{align*}
		\sum_{j\in P}x^\prime_{ij}&\geq \sum_{p \in N(i)} \frac{x_{p i}}{\sum_{l\in P}x_{p l}}\cdot \sum_{j\in P}x_{p j}\\
		&= \sum_{p \in N(i)} x_{p i}\\
		&= 1 
		\end{align*}
\end{proof}

\lemmaKmedCost*

\begin{proof}
        In the following, we will bound the cost of the resulting solution.
        For better readability, we divide the assignment function into two parts:\begin{itemize}
            \item $x'^N$ which describes the part of the assignments that centers get assigned from the mass which their neighbors receive in $(x, y)$. More formal \[x'^N_{ij}=\sum_{p\in N(i)}r^{i}_{p}\cdot x_{pj}\]
            \item $x'^\theta$ which describes the remaining assignments via the nearest neighbor function $\theta$ which leads to \[x'^\theta_{ij}=\sum_{p\in \theta^{-1}(i)}(1-\sum_{l^\prime\in \CDS}r^{l'}_{p})x_{pj}\]
        \end{itemize}
        For an illustration, we defer to \Cref{fig:rrmed} which shows $x'^N$ via the \textcolor{blue}{blue} arrows and $x'^\theta$ via the \textcolor{red}{red} arrows. It is easy to see that it holds $x'_{ij}=x'^N_{ij}+ x'^\Theta_{ij}$. Therefore we know cost$(x', y')=$cost$(x'^N, y')+$cost$(x'^\Theta, y')$.

        We first bound the cost of $(x'^N, y')$. By repeated application of the triangle inequality, we get:
      
        \begin{align}
        &\text{cost}(x'^N, y')=\sum_{i\in \CDS}\sum_{j\in P}x'^N_{ij}d(i, j)\\
        &=\sum_{i\in \CDS}\left(\sum_{p\in N(i)}r^{i}_{p}\cdot\sum_{j\in P}x_{pj}d(i, j)\right)\\
        &\leq \sum_{i\in \CDS}\left(\sum_{p\in N(i)}\frac{x_{pi}}{\sum_{l\in P}x_{pl}}\cdot\sum_{j\in P}x_{pj}(d(p, j)+d(p, i)) \right)\\
        &=\sum_{i\in \CDS}\left(\sum_{p\in N(i)}x_{pi}d(p, i)+\frac{x_{pi}}{\sum_{l\in P}x_{pl}}\cdot\sum_{j\in P}x_{pj}d(p,j) \right)\\
        &\leq \text{cost}(x, y)+ \sum_{i\in \CDS}\sum_{p\in N(i)}\frac{x_{pi}}{\sum_{l\in P}x_{pl}}\cdot\sum_{j\in P}x_{pj}d(p,j)\\
        &=\text{cost}(x, y)+ \sum_{i\in \CDS}\sum_{p\in N(i)}r^{i}_{p}\cdot\sum_{j\in P}x_{pj}d(p,j)\\
        \end{align}
        
        We can bound the cost of $(x'^\theta, y')$ the following way:
        Let $j\in P, i\in \CDS$ and let $\varphi^{DS}(j)=\argmin_{x\in \CDS}d(x, j)$ be the center $j$ is assigned to in the center fair solution $(x^{DS}, y^{DS})$. Then repeatedly using the triangle inequality and the definition of $\theta$ leads to:
        
        \begin{align}
        &x'^\theta_{ij}d(i, j)\\
        &=\sum_{p\in \theta^{-1}(i)}(1-\sum_{l^\prime\in \CDS}r^{l'}_{p})\cdot x_{pj}d(i, j)\\
        &\leq\sum_{p\in \theta^{-1}(i)}(1-\sum_{l^\prime\in  \CDS}r^{l'}_{p})\cdot x_{pj}(d(p, j)+d(p, i))\\
        &=\sum_{p\in \theta^{-1}(i)}\left(x_{pj}(d(p, j)+d(p, i))-\sum_{l^\prime\in \CDS}r^{l'}_{p}\cdot x_{pj}(d(p, j)+d(p, i))\right)\\
        &\leq\sum_{p\in \theta^{-1}(i)}x_{pj}(2d(p, j)+d(\varphi^{DS}(j), j))-\sum_{p\in \theta^{-1}(i)}\sum_{l^\prime\in  \CDS}r^{l'}_{p}\cdot x_{pj}d(p, j)
        \end{align}
        When we sum up over all centers and points this leads to the following bound:
        \begin{align}
        &\text{cost}(x'^\theta, y')=\sum_{i\in \CDS}\sum_{j\in P}x'^\theta_{ij}d(i, j)\\
        &\leq\sum_{i\in \CDS}\sum_{j\in P}\left(\sum_{p\in \theta^{-1}(i)}x_{pj}(2d(p, j)+d(\varphi^{DS}(j), j))-\sum_{p\in \theta^{-1}(i)}\sum_{l^\prime\in  \CDS}r^{l'}_{p}\cdot x_{pj}d(p, j)\right)\\
        &\leq 2\text{cost}(x, y) +\text{cost}(x^{DS}, {y^{DS}})-\sum_{i\in \CDS}\sum_{j\in P}\sum_{p\in \theta^{-1}(i)}\sum_{l^\prime\in \CDS}r^{l'}_{p}\cdot x_{pj}d(p, j)\\
        &= 2\text{cost}(x, y) +\text{cost}(x^{DS}, {y^{DS}})-\sum_{j\in P}\sum_{p\in \theta^{-1}[\CDS]}\sum_{l^\prime\in \CDS}r^{l'}_{p}\cdot x_{pj}d(p, j)\\
        &= 2\text{cost}(x, y) +\text{cost}(x^{DS}, {y^{DS}})-\sum_{j\in P}\sum_{l^\prime\in \CDS}\sum_{p\in N(l')}r^{l'}_{p}\cdot x_{pj}d(p, j)
		\end{align}

        Therefore by combining our bounds for cost$(x'^N, y')$ and cost$(x'^\Theta, y')$ we get 
        \begin{align}
        \text{cost}(x'_{ij}, y')&=\text{cost}(x'^N_{ij}, y')+\text{cost}(x'^\theta_{ij}, y')\\
        &\leq \text{cost}(x, y)+ \sum_{i\in \CDS}\sum_{p\in N(i)}r^{i}_{p}\cdot\sum_{j\in P}x_{pj}d(p,j)+ 2\text{cost}(x, y) \\&+\text{cost}(x^{DS}, {y^{DS}})-\sum_{j\in P}\sum_{l^\prime\in \CDS}\sum_{p\in N(l')}r^{l'}_{p}\cdot x_{pj}d(p, j)\\
        &=3\text{cost}(x, y) +\text{cost}(x^{DS}, y^{DS})
        \end{align}
        which concludes the proof.
\end{proof}

\lemmaKmeansCost*

\begin{proof}
 The analysis of the cost of the $k$-means cost function works analogously to the analysis of the $k$-median objective. The main difference is that we cannot simply use the triangle inequality because the distances are squared. Therefore we will be using the more general version $(a+b)^2\leq (1+p^2)\cdot a^2+(1+\frac{1}{p^2})\cdot b^2$ for $p>0$. In the following $(x'^N, y')$ and $(x'^\theta, y')$ are defined as in the proof of the previous \Cref{lemma:med cost}.

        Again, we start by bounding the cost of $(x'^N, y')$:
        \begin{align}
        &\text{cost}(x'^N, y')=\sum_{i\in \CDS}\sum_{j\in P}x'^N_{ij}d(i, j)^2\\
        &=\sum_{i\in \CDS}\sum_{p\in N(i)}r^{i}_{p}\cdot\sum_{j\in P}x_{pj}d(i, j)^2\\
        &\leq \sum_{i\in \CDS}\left(\sum_{p\in N(i)}r^{i}_{p}\cdot\sum_{j\in P}x_{pj}\cdot((1+\frac{1}{p_1^2})d(p, j)^2+(1+p_1^2)d(p, i)^2) \right)\\
        &=\sum_{i\in \CDS}\left(\sum_{p\in N(i)}(1+\frac{1}{p_1^2})x_{pi}d(p, i)^2+r^{i}_{p}\cdot\sum_{j\in P}(1+p_1^2)x_{pj}d(p,j)^2 \right)\\
        &\leq (1+\frac{1}{p_1^2})\text{cost}(x, y)+ \sum_{i\in \CDS}\sum_{p\in N(i)}r^{i}_{p}\cdot\sum_{j\in P}(1+p_1^2)x_{pj}d(p,j)^2
        \end{align}
        for $p_1\in \mathbb{R}_{>0}$.

        We can bound the cost of $(x'^\theta, y')$ the following way:
        Let $j\in P, i\in \CDS$ and let $\varphi^{DS}(j)=\argmin_{x\in \CDS}d(x, j)$ be the center $j$ is assigned to in the center fair solution $(x^{DS}, y^{DS})$. 
        
        \begin{align}
        &x'^\theta_{ij}d(i, j)^2
        =\sum_{p\in \theta^{-1}(i)}(1-\sum_{l^\prime\in \CDS}r^{l'}_{p})x_{pj}d(i, j)^2\\
        &\leq\sum_{p\in \theta^{-1}(i)}(1-\sum_{l^\prime\in \CDS}r^{l'}_{p})x_{pj}\cdot(1+p_2^2)d(p, j)^2+((1+\frac{1}{p_2^2})d(p, i)^2)\\
        &=\sum_{p\in \theta^{-1}(i)}x_{pj}((1+p_2^2)d(p, j)^2+(1+\frac{1}{p_2^2})d(p, i)^2)\\
        &\color{white}{\leq}\color{black}-\sum_{p\in \theta^{-1}(i)}\sum_{l^\prime\in \CDS}r^{l'}_{p}x_{pj}((1+p_2^2)d(p, j)^2+(1+\frac{1}{p_2^2})d(p, i)^2)\\
        &\leq\sum_{p\in \theta^{-1}(i)}x_{pj}(1+p_2^2)d(p, j)^2+((1+\frac{1}{p_2^2})d(p, i)^2)-\sum_{p\in \theta^{-1}(i)}\sum_{l^\prime\in \CDS}r^{l'}_{p}x_{pj}(1+p_2^2)d(p, j)^2\\
        &\leq\sum_{p\in \theta^{-1}(i)}x_{pj}(1+p_2^2)d(p, j)^2+((1+\frac{1}{p_2^2})((1+\frac{1}{q^2})d(\varphi^{DS}(j), j)^2+(1+q^2)d(p, j)^2)\\
        &\color{white}{\leq}\color{black} -\sum_{p\in \theta^{-1}(i)}\sum_{l^\prime\in \CDS}r^{l'}_{p}x_{pj}(1+p_2^2)d(p, j)^2\\
        &=\sum_{p\in \theta^{-1}(i)}(1+p_2^2+(1+\frac{1}{p_2^2})(1+q^2))x_{pj}d(p, j)^2+((1+\frac{1}{p_2^2})(1+\frac{1}{q^2}))x_{pj}d(\varphi^{DS}(j), j)^2\\
        &\color{white}{\leq}\color{black} -\sum_{p\in \theta^{-1}(i)}\sum_{l^\prime\in \CDS}r^{l'}_{p}x_{pj}(1+p_2^2)d(p, j)^2
        \end{align}
        for $p_2, q>0$. 
        
        If we take the sum over all centers $i\in \CDS$ and points $j\in P$ we get
        \begin{align}        
        &\text{cost}(x'^\theta, y')=\sum_{i\in \CDS}\sum_{j\in P}x'^\theta_{ij}d(i, j)^2\\
        &=\sum_{i\in \CDS}\sum_{j\in P}\sum_{p\in \theta^{-1}(i)} \left[ 1+p_2^2+(1+\frac{1}{p_2^2})(1+q^2))x_{pj}d(p, j)^2\right.\\
        &+\left.(1+\frac{1}{p_2^2})(1+\frac{1}{q^2})x_{pj}d(\varphi^{DS}(j), j)^2) \right] \\
        &\color{white}{\leq}\color{black} -\sum_{i\in \CDS}\sum_{j\in P}\sum_{p\in \theta^{-1}(i)}\sum_{l^\prime\in \CDS}r^{l'}_{p}x_{pj}(1+p_2^2)d(p, j)^2\\
        &\leq (1+p_2^2+(1+\frac{1}{p_2^2})(1+q^2))\text{cost}(x, y) +(1+\frac{1}{p_2^2})(1+\frac{1}{q^2})\text{cost}(x^{DS}, {y^{DS}})\\
        &\color{white}{\leq}\color{black}-\sum_{i\in \CDS}\sum_{j\in P}\sum_{p\in \theta^{-1}(i)}\sum_{l^\prime\in \CDS}r^{l'}_{p}x_{pj}(1+p_2^2)d(p, j)^2\\
        &\leq (1+p_2^2+(1+\frac{1}{p_2^2})(1+q^2))\text{cost}(x, y) +(1+\frac{1}{p_2^2})(1+\frac{1}{q^2})\text{cost}(x^{DS}, {y^{DS}})\\
        &\color{white}{\leq}\color{black}-\sum_{p\in \CDS}\sum_{j\in P}\sum_{l^\prime\in \CDS}r^{l'}_{p}x_{pj}(1+p_2^2)d(p, j)^2
		\end{align}
        for $p_2, q\in \mathbb{R}_{>0}$. 
         
        Therefore if we combine the bounds for cost$(x'^N, y')$ and cost$(x'^\Theta, y')$ and set $p_1=p_2=p$ we get
        \begin{align}
        &\text{cost}(x'_{ij}, y')=\text{cost}(x'^N_{ij}, y')+\text{cost}(x'^\theta_{ij}, y')\\
        &\leq (1+\frac{1}{p^2})\text{cost}(x, y)+ \sum_{i\in \CDS}\sum_{p\in N(i)}r^{i}_{p}\cdot\sum_{j\in P}(1+p^2)x_{pj}d(p,j)^2\\
        &\color{white}{\leq}\color{black}+ (1+p^2+(1+\frac{1}{p^2})(1+q^2))\text{cost}(x, y) +(1+\frac{1}{p^2})(1+\frac{1}{q^2})\text{cost}(x^{DS}, {y^{DS}})\\
        &\color{white}{\leq}\color{black}-\sum_{p\in \CDS}\sum_{j\in P}\sum_{l^\prime\in \CDS}r^{l'}_{p}x_{pj}(1+p^2)d(p, j)^2\\
        &=(1+p^2+(1+\frac{1}{p^2})(2+q^2))\text{cost}(x, y) +(1+\frac{1}{p^2})(1+\frac{1}{q^2})\text{cost}(x^{DS}, {y^{DS}})
        \end{align}
        for $p, q >0$. This concludes the proof.
\end{proof}

\lemmaAdditiveViolkcmedmeans*

\begin{proof}
     First, we remember that by Lemma \ref{lemma:medmeans-feasability} $(x', y')$ was a fractional solution satisfying group fairness. To make notation a bit easier, we denote by mass$_h(i,x)= \sum_{j\in P^h}x_{ij}$ the amount of color $h$ points that get assigned to center $i$ in the solution $(x, y)$. Analogously let mass$(i,x)= \sum_{j\in P}x_{ij}$ be the total amount of points assigned to center $i$ in $(x, y)$. It is easy to check that the following hold:
\begin{itemize}
\item $\floor{\text{mass}_h(i,x^\prime)}\leq \text{mass}_h(i,x^{\prime\prime})\leq \ceil{\text{mass}_h(i,x^\prime)}$ by the balance of $(i, h)$ and the fact that the outgoing edge only has capacity 1
\item$\floor{ \text{mass}(i,x^\prime)}\leq  \text{mass}(i,x^{\prime\prime})\leq \ceil{ \text{mass}(i,x^\prime)}$ by the balance of $i$ and the fact that the outgoing edge only has capacity 1

\end{itemize}

Therefore we know 
\begin{align*}
\text{mass}_h(i,x^{\prime\prime})&\leq \text{mass}_h(i,x^\prime)+1
\leq u_h \cdot \text{mass}(i,x^\prime) +1
\leq u_h(\text{mass}(i,x^{\prime\prime})+1)+1\\
&=u_h\cdot \text{mass}(i,x^{\prime\prime})+u_h+1
\leq u_h\cdot \text{mass}(i,x^{\prime\prime})+2
\end{align*}
and analogously
\begin{align*}
\text{mass}_h(i,x^{\prime\prime})&\geq \text{mass}_h(i,x^\prime)-1
\geq l_h \cdot  \text{mass}(i,x^\prime) -1
\geq l_h(\text{mass}(i,x^{\prime\prime})-1)-1\\
&=l_h\cdot \text{mass}(i,x^{\prime\prime})-l_h-1
\geq l_h\cdot \text{mass}(i,x^{\prime\prime})-2
\end{align*}

This shows that there is an additive violation of at most 2 for the (GF) constraint. 
\end{proof}

\thmmedmeans*

\begin{proof}
To see that $(x'', y')$ satisfies the center fairness constraint, we recall that $\CDS$ fulfilled the center fairness constraint. Since we never opened or changed any centers in $\CDS$, we only have to show that every $i\in \CDS$ gets assigned at least one point in $(x'', y')$. Let $i\in \CDS$. From \Cref{lemma:medmeans-feasability} we know that $\sum_{j\in P}x'_{ij}\geq 1$, and therefore either there exists an $h\in H$ with $\sum_{j\in P^h}x'_{ij}\geq 1$ or $\sum_{j\in P^h}x'_{ij}< 1$ for all $h\in H.$ In the first case we know that the balance of the point $(i, h)$ is smaller or equal to $-1$, which means that at least one point has to send (integral) flow in $(x'', y')$ to it, which means there is at least one point assigned to $i$. If $\sum_{j\in P^h}x'_{ij}< 1$ for all $h\in H$, then we know that $-(\floor{\sum_{j\in P}x^\prime_{ij}}-\sum_{h\in H}\floor{\sum_{j\in P_h}x^\prime_{ij}})\leq -1$ and therefore the vertex corresponding to $i$ gets integral flow from at least 1 point in the solution, meaning that at least one point is assigned to $i$. Thus our solution satisfies center fairness.

The group fairness follows directly from \Cref{lem:finalAssignmedmeans}.

For the cost bound, we see that $(x', y')$ can be translated into a valid solution for the min cost flow with the same cost, and therefore the cost of (x', y') is an upper bound for the cost of $(x'', y')$ for both objective functions. The cost bound for $k$-median follows now directly from \Cref{lemma:med cost} and the fact that both $\cost(x, y)$ and $\frac{1}{\dsfactormed}\cost(x^{\text{DS}}, y^{\text{DS}})$ are lower bounds for the cost of the optimal doubly fair solution, since this optimal solution has to be both, group fair and center fair. The bound for the $k$-means cost follows analogously if we set $p^2=\sqrt{1+(1+\sqrt{\dsfactormeans})^2}$ and $q^2=\sqrt{\dsfactormeans}$. 

The running time analysis is very similar to the $k$-center case. 
Computing a set of centers with the desired center based constrained is dependent on the chosen approximation algorithm.
Both the algorithms we discussed in combination with the reduction by Thejaswi et al. \cite{thejaswi2021diversity}, the matroid median algorithm by Krishnaswamy et al. \cite{krishnaswamy2011matroid} for DS-fair $k$-median, and the matroid means algorithm by Zhao et al. \cite{zhao2025matroidmeans} for DS-fair $k$-means have polynomial running times.
Finding an upper bound on the clustering cost and rerouting the points to centers fulfilling the center-based constraint does not take the same time as in the $k$-center case.
The main difference here is solving a min-cost flow instead of a max flow, which can also be computed in polynomial time, e.\,g., using the algorithm by Chen et al. \cite{Chen2025MinCostFlow}.
The total running time is polynomial for both $k$-median and $k$-means.
This concludes the proof of the theorem.
\end{proof}

\newpage
\section{Notation}

\begin{table}[h]
    \centering
    \begin{tabular}{lp{0.7\textwidth}}
        \((P,d)\) & general metric space with point set \(P\) \\
         \(\C\) & set of centers\\
         \(C_1,\ldots,C_k\) & clusters\\
         \(H\) & set of colors \\
         \(m=|H|\) & number of colors\\
         \(P_h\) & sets of points with color \(h\) for \(h \in \{1,\dots,m\}\) \\
         \(\varphi\colon P \to \C\) & assignment of points to centers\\
         \(\ell_h,u_h\) & lower and upper bound for color \(h\) in the group fairness definition \\
         \(L_h, U_h\) & lower and upper bound for color \(h\) in the definition of diverse center selection\\
         \(\CDS\) & center set of the DS-fair solution\\
         \((x,y)\) & solution of the group fairness LP\\
         \((x', y')\) & solution after the rerouting step \\
         \((x'', y'')\) & final, integral doubly fair solution \\
         \(x_{ij}\) & the amount of mass sent from point \(j\) to point \(i\)\\ 
         \(y_i\) & the extent to which point \(i\) is opened (used as a center)\\
         \(\dsfactorcenter,\dsfactormed, \dsfactormeans\)&currently best approximation factor to the (DS) variant of the resp. objective\\
         mass$(i, x)$  &sum over all $x_{ij}$\\
        mass$_h(i, x)$&sum over all $x_{ij}$ where j has color $h$\\
        \(r_p^i\)&ratio of $x_{pi}$ on mass$(i, x)$
    \end{tabular}
    \caption{Notation used throughout the paper}
    \label{tab:notation}
\end{table}

\end{document}